\newcommand{\rv}[1]{\mathbf{#1}}    % Random variable
\newcommand{\prsmall}[1]{\mathbb{P}\mathrm{r}\{#1\}} %probability
\newcommand{\nn}{\nonumber}
\newcommand{\ds}[1]{\mathbb{D}_{\mathsf{#1}}}
\newcommand{\bu}[1]{\mathrm{bkt}(#1)}
\newcommand{\given}{\,|\,}
\newcommand{\pa}{P_\mathcal{G}}
\newcommand{\padp}{P_{\tilde{\mathcal{G}}}}
\newcommand{\pr}[1]{\mathbb{P}\mathrm{r}\{#1\}} %probability
\newcommand{\ignore}[1]{}
\newtheorem{mechanism}{\bf Mechanism}
\newtheorem{privacytest}{\bf Privacy Test}
\newtheorem{theorem}{\bf Theorem}%[section]
\newtheorem{theoremrec}{\bf Theorem}%[section]
\newtheorem{corollary}{\bf Corollary}%[theorem]
\newtheorem{lemma}{\bf Lemma}%[theorem]
\newtheorem{definition}{\bf Definition}%[section]
\newtheorem{observation}{\bf Observation}%[theorem]
\def\@copyrightspace{\relax}
\begin{document}

\title{Plausible Deniability for Privacy-Preserving Data Synthesis (Extended Version)}

\numberofauthors{3}
\author{
\alignauthor
Vincent Bindschaedler\\
\affaddr{UIUC}\\
\email{bindsch2@illinois.edu}
\alignauthor
Reza Shokri\\
\affaddr{Cornell Tech}\\
\email{shokri@cornell.edu}
\alignauthor
Carl A. Gunter\\
\affaddr{UIUC}\\
\email{cgunter@illinois.edu}
}

\maketitle

\begin{abstract}

Releasing full data records is one of the most challenging problems in
data privacy. On the one hand, many of the popular techniques such as data
de-identification are problematic because of their dependence
on the background knowledge of adversaries.  On the other hand,
rigorous methods such as the exponential mechanism for differential
privacy are often computationally impractical to use for releasing high
dimensional data or cannot preserve high utility of original
data due to their extensive data perturbation.

This paper presents a criterion called \emph{plausible
deniability} that provides a formal privacy guarantee, notably for
releasing sensitive datasets: an output record can be released only
if a certain amount of input records are indistinguishable, up to a
privacy parameter. This notion does not depend on the background
knowledge of an adversary. Also, it can efficiently be checked by privacy tests. We present mechanisms to generate \emph{synthetic datasets} with similar statistical properties to the input data and the same format. We study this technique both theoretically and experimentally. A key theoretical result shows that, with proper randomization, the plausible deniability mechanism generates differentially private synthetic data.
We demonstrate the efficiency of this generative technique on a large dataset; it is shown to preserve the utility of original data with respect to various statistical analysis and machine learning measures.

\end{abstract}
\section{Introduction}

There is tremendous interest in releasing datasets for research and
development.  Privacy policies of data holders, however, prevent them
from sharing their sensitive datasets.  This is due, to a large
extent, to multiple failed attempts of releasing datasets using
imperfect privacy-preserving mechanisms such as de-identification. A
range of inference attacks on, for example, AOL search log dataset
\cite{nytaol2006}, Netflix movie rating
dataset~\cite{narayanan2008robust}, Genomic
data~\cite{wang2009learning,humbert2013addressing}, location
data~\cite{golle2009anonymity,privacynyctaxi}, and social networks
data \cite{narayanan2009anonymizing}, shows that simple modification
of sensitive data by removing identifiers or by generalizing/suppressing
data features results in major information leakage and cannot
guarantee meaningful privacy for data owners.  These simple
de-identification solutions, however, preserve data utility as they
impose minimal perturbation to real data.

Rigorous privacy definitions, such as differential privacy
\cite{dwork2006calibrating}, can theoretically guarantee privacy and
bound information leakage about sensitive data.  However, known
mechanisms, such as the Laplacian mechanism \cite{dwork2006calibrating} or the
exponential mechanism \cite{mcsherry2007mechanism}, that achieve
differential privacy through randomization, have practical
limitations.  The majority of scenarios, where they have been applied,
are limited to interactive count queries on statistical databases
\cite{dwork2008differential}.  In a non-interactive setting for
releasing generic datasets, these mechanisms are either
computationally infeasible on {\em high-dimensional} data, or
practically ineffective because of their large {\em utility} costs
\cite{kifer2011no}.  At best, these methods are used to
release some privacy-preserving statistics (e.g., histograms
\cite{blocki2016differentially, xu2013differentially}) about a
dataset, but not {\em full} data records.  It is not
obvious how to protect the privacy of full records as opposed to that
of aggregate statistics (by adding random noise).

Despite all these obstacles, releasing full data records is firmly
pursued by large-scale data holders such as the U.S. Census
Bureau~\cite{hawala2008producing, kinney2011towards, kinney2014synlbd}.
The purpose of this endeavor is to allow researchers to develop
analytic techniques by processing full synthetic data records rather
than a limited set of statistics.  Synthetic data could also be used
for educational purpose, application development for data analysis,
sharing sensitive data among different departments in a company,
developing and testing pattern recognition and machine learning
models, and algorithm design for sensitive data.  There exists some
inference-based techniques to {\em assess} the privacy risks of
releasing synthetic data \cite{reiter2009estimating,
reiter2014bayesian}.  However, the major open problem is how to {\em
generate} synthetic full data records with {\em provable privacy},
that experimentally can achieve acceptable utility in various
statistical analytics and machine learning settings.

In this paper, we fill this major gap in data privacy by proposing a
generic theoretical framework for generating synthetic data in a
privacy-preserving manner.  The fundamental difference between our
approach and that of existing mechanisms for differential privacy
(e.g., exponential mechanism) is that we disentangle the data
generative model from privacy definitions.  Instead of forcing a
generative model to be privacy-preserving by design, which might
significantly degrade its utility, we can use a utility-preserving
generative model and release only a subset of its output that
satisfies our privacy requirements.  Thus, for designing a generative
model, we rely on the state-of-the-art techniques from data science
independently from the privacy requirements.  This enables us to
generate high utility synthetic data.

We formalize the notion of {\em plausible deniability} for data
privacy~\cite{bindschaedler2016synthesizing}, and generalize it to
any type of data.  Consider a probabilistic generative model that
transforms a real data record, as its seed, into a synthetic data
record.  We can sample many synthetic data records from each seed
using such a generative model.  According to our definition, a
synthetic record provides plausible deniability if there exists a set
of real data records that could have generated the same synthetic data
with (more or less) the same probability by which it was generated
from its own seed.  We design a privacy mechanism that provably
guarantees plausible deniability.  This mechanism results in
\emph{input indistinguishability}: by observing the output set (i.e.,
synthetics), an adversary cannot tell for sure whether a particular
data record was in the input set (i.e., real data).  The degree of
this indistinguishability is a parameter in our mechanism.

Plausible deniability is a property of the overall process, and
similar to differential privacy, it is independent of any adversary's
background knowledge.  In fact, we prove that our proposed plausible deniable data synthesis process
can also satisfy differential privacy, if we randomize the indistinguishability parameter in the
privacy mechanism.  This is a significant theoretical result towards
achieving strong privacy using privacy-agnostic utility-preserving
generative models.  Thus, we achieve differential privacy {\em
without} artificially downgrading the utility of the synthesized data
through output perturbation.

The process of generating a single synthetic data record and testing
its plausible deniability can be done independently from that of other
data records.  Thus, millions of data records can be generated and
processed in parallel.  This makes our framework extremely efficient
and allows implementing it at a large scale.  In this paper, we
develop our theoretical framework as an open-source tool, and run it
on a large dataset: the American Community Survey~\cite{acsweb}
from the U.S. Census Bureau which contains over 3.1 million records. In fact, we can generate over one million
privacy-preserving synthetic records in less than one hour on a
multi-core machine running $12$ processes in parallel. 

We analyze the utility of synthetic data in two major scenarios:
extracting statistics for data analysis, and performing prediction
using machine learning.  We show that our privacy test does not impose
high utility cost.  We also demonstrate that a significant fraction of
candidate synthetic records proposed by a generative model can pass
the privacy test even for strict privacy parameters.

We show that a strong adversary cannot distinguish a synthetic record from a real one with better than 63.0\% accuracy (baseline: 79.8\%). Furthermore, when it comes to classification tasks, the accuracy of the model learned on a synthetic dataset is only slightly lower than that of model trained on real data. For example, for Random Forest the accuracy is 75.3\% compared to 80.4\% when trained on real data (baseline: 63.8\%); whereas for AdaBoostM1 the accuracy is 78.1\% compared to 79.3\% when trained on real data (baseline: 69.2\%). Similar results are obtained when we compare logistic regression (LR) and support vector machine (SVM) classifiers trained on our synthetic datasets with the same classifiers trained (on real data) in a differential private way (using state-of-the-art techniques). 
Concretely, the accuracy of classifiers trained on our synthetic data is $77.5\%$ (LR) and $77.1\%$ (SVM); compared to $76.3\%$ (LR) and $78.2\%$ (SVM) for objective-perturbation $\varepsilon$-DP classifiers.

\smallskip

\noindent {\bf Contributions.}  We introduce a formal framework for
plausible deniability as a privacy definition.  We also design a
mechanism to achieve it for the case of generating synthetic data.  We
prove that using a randomized test in our plausible deniability mechanism achieves
differential privacy (which is a stronger guarantee).  We also show
how to construct generative models with differential privacy guarantees.
The composition of our generative model and plausible deniability
mechanism also satisfies differential privacy.  We show the high accuracy
of our model and utility of our generated synthetic data.  We develop
a generic tool and show its high efficiency for generating millions of
full data records.
\section{Plausible Deniability}\label{sec:pd}
In this section, we formalize plausible deniability as a new privacy notion for releasing privacy-preserving synthetic data.  We also present a mechanism to achieve it.  Finally, we prove that our mechanism can also satisfy differential privacy (which is a stronger guarantee) by slightly randomizing our plausible deniability mechanism. 

Informally, plausible deniability states that an adversary (with any background knowledge) cannot deduce that a particular record in the input (real) dataset was significantly more responsible for an observed output (synthetic record) than was a collection of other input records.  A mechanism ensures plausible deniability if, for a privacy parameter $k>0$, there are at least $k$ input records that could have generated the observed output with similar probability. 

Unlike the majority of existing approaches (e.g., to achieve differential privacy), designing a mechanism to satisfy plausible deniability for generative models does not require adding artificial noise to the generated data.  Instead, we separate the process of releasing privacy-preserving data into running two independent modules: (1) generative models, and (2) privacy test.  The first consists in constructing a utility-preserving generative data model.  This is ultimately a data science task which requires insight into the type of data for which one wants to generate synthetics.  By contrast, the privacy test aims to safeguard the privacy of those individuals whose data records are in the input dataset.  Every generated synthetic is subjected to this privacy test; if it passes the test it can be safely released, otherwise it is discarded.  This is where the plausible deniability criterion comes into the frame: the privacy test is designed to ensure that any released output can be plausibly denied.

In this section, we assume a generic generative model that, given a data record in the input dataset as seed, produces a synthetic data record.  In Section~\ref{sec:gm}, we present a generic generative model based on statistical models, and show how it can be constructed in a differentially-private manner, so that it does not significantly leak about its own training data.  Plausibly deniable mechanisms protect the privacy of the seeds, and are not concerned about how the generative models are constructed. 

Let $\mathcal{M}$ be a probabilistic generative model that given any data record $d$ can generate synthetic records $y$ with probability $\pr{ y = \mathcal{M}(d) }$. Let $k \ge 1$ be an integer and $\gamma \ge 1$ be a real number. Both $k$ and $\gamma$ are privacy parameters. 
\begin{definition}[Plausible Deniability]\ \\
For any dataset $D$ with $|D| \ge k$, and any record $y$ generated by a probabilistic generative model $\mathcal{M}$ such that $y = \mathcal{M}(d_1)$ for $d_1 \in D$, we state that $y$ is releasable with $(k,\gamma)$-{\em plausible deniability}, if there exist at least $k-1$ distinct records $d_2, ..., d_k \in D \setminus \{ d_1 \}$ such that
	\begin{align} 
		\gamma^{-1} \leq \frac{\pr{y = \mathcal{M}(d_i)}}{\pr{y = \mathcal{M}(d_j)}} \leq \gamma ,
		\label{eq:pdcond}
	\end{align}
	for any $i,j \in \{1, 2, \ldots, k\}$.
\label{def:pdmech}
\label{def:pdcriterion}
\end{definition}
The larger privacy parameter $k$ is, the larger the indistinguishability set for the input data record.  Also, the closer to $1$ privacy parameter $\gamma$ is, the stronger the indistinguishability of the input record among other plausible records. 

Given a generative model $\mathcal{M}$, and a dataset $D$, we need a mechanism $\mathcal{F}$ to guarantee that the privacy criterion is satisfied for any released data. Specifically $\mathcal{F}$ produces data records by using $\mathcal{M}$ on dataset $D$. The following mechanism enforces $(k,\gamma)$-plausible deniability by construction. 
\begin{mechanism}[$\mathcal{F}$ with Plausible Deniability]\ \\ 
Given a generative model $\mathcal{M}$, dataset $D$, and parameters $k$, $\gamma$, output a synthetic record $y$ or nothing.
\begin{enumerate}[noitemsep]
	\item{Randomly sample a {\em seed} record $d \in D$.}
	\item{Generate a {\em candidate} synthetic record $y = \mathcal{M}(d)$.}
	\item{Invoke the {\bf privacy test} on $(\mathcal{M},D,d,y,k,\gamma)$.}
	\item{If the tuple passes the test, then release $y$. \\Otherwise, there is no output.}
\end{enumerate}
\label{def:mechanism}
\end{mechanism}

The core of Mechanism~\ref{def:mechanism} ($\mathcal{F}$) is a privacy test that simply rejects a candidate synthetic data record if it does not satisfy a given privacy criterion.

We can think of Definition~\ref{def:pdmech} as a \emph{privacy criterion} that can be efficiently checked and enforced. So, instead of trying to measure how sensitive the model $\mathcal{M}$ is with respect to input data records, we test if there are enough indistinguishable records in the input dataset that could have (plausibly) generated a candidate synthetic data record. 
\begin{privacytest}[Deterministic test $\mathcal{T}$]\ \\
Given a generative model $\mathcal{M}$, dataset $D$, data records $d$ and $y$, and privacy parameters $k$ and $\gamma$, output $\mathrm{pass}$ to allow releasing $y$, otherwise output $\mathrm{fail}$. 
\begin{enumerate}[noitemsep]
	\item{Let $i \ge 0$ be the (only) integer that fits the inequalities \[ \gamma^{-i-1} < \prsmall{y = \mathcal{M}(d)} \leq \gamma^{-i} \ . \]}
	\item{Let $k'$ be the number of records $d_a \in D$ such that \[ \gamma^{-i-1} < \prsmall{y = \mathcal{M}(d_a)} \leq \gamma^{-i} \ . \]}
	\item{If $k' \ge k$ then return $\mathrm{pass}$, otherwise return $\mathrm{fail}$}.
\end{enumerate}
\label{def:privacytest}
\end{privacytest}

Step 2 counts the number of \emph{plausible} seeds, i.e., records in $D$ which could have plausibly produced $y$. Note that for a given $y$, there may exist some records $d_a \in D$ such that $\prsmall{y = \mathcal{M}(d_a)} = 0$. Such records cannot be plausible seeds of $y$ since no integer $i \geq 0$ fits the inequalities.

Remark that Privacy Test~\ref{def:privacytest} ($\mathcal{T}$) enforces a stringent condition that the probability of generating a candidate synthetic $y$ given the seed $d$ and the probability of generating the same record given another plausible seed $d_a$ both fall into a geometric range $[\gamma^{-i-1},\gamma^{-i}]$, for some integer $i \ge 0$, assuming $\gamma > 1$. Notice that, under this test, the set of $k-1$ different $d_a$s plus $d$ satisfies the plausible deniability condition \eqref{eq:pdcond}.

Informally, the threshold $k$ prevents releasing the \emph{implausible} synthetics records $y$.  As $k$ increases the number of plausible records which could have produced $y$ also increases.  Thus, an adversary with only partial knowledge of the input dataset cannot readily determine whether a particular input record $d$ was the seed of any released record $y$. This is because there are at least $k-1$ other records $d_i \neq d$ in the input dataset which could \emph{plausibly} have been the seed. 
However, whether $y$ passes the privacy test itself reveals something about the number of plausible seeds, which could potentially reveal whether a particular $d$ is included in the input data.
This can be prevented by using a privacy test which randomizes the threshold $k$ (as Section~\ref{sec:pd:proofs} shows) in which case the mechanism achieves $(\varepsilon,\delta)$-differential privacy.

\subsection{Relationship with Differential Privacy}\label{sec:pd:proofs}

We show a connection between Plausible Deniability and Differential Privacy, given the following definition.
\begin{definition}[Differential Privacy~\cite{dwork2014algorithmic}]\ \\
Mechanism $F$ satisfies $(\varepsilon,\delta)$-\emph{differential privacy} if for any neighboring datasets $D$, $D'$, and any output $S \subseteq {\rm{Range}}(F)$:
	\[ \pr{F(D') \in S} \leq e^\varepsilon \pr{F(D) \in S} + \delta \ . \]
\end{definition}
Typically, one chooses $\delta$ smaller than an inverse polynomial in the size of the dataset, e.g., $\delta \leq |D|^{-c}$, for some $c > 1$.

In this section, we prove that if the privacy test is randomized in a certain way, then Mechanism~\ref{def:mechanism} ($\mathcal{F}$) is in fact $(\varepsilon,\delta)$-differentially private for some $\delta > 0$ and $\varepsilon > 0$.  Privacy Test~\ref{def:privacytest} simply counts the number of plausible seeds for an output and only releases a candidate synthetic if that number is at least $k$.  We design Privacy Test~\ref{def:randprivacytest} which is identical except that it randomizes the threshold $k$.
\begin{privacytest}[Randomized test $\mathcal{T}_{\epsilon_0}$]\ \\
Given a generative model $\mathcal{M}$, dataset $D$, data records $d$ and $y$, privacy parameters $k$ and $\gamma$, and randomness parameter $\epsilon_0$, output $\mathrm{pass}$ to allow releasing $y$, otherwise output $\mathrm{fail}$.
\begin{enumerate}[noitemsep]
	\item{Randomize $k$ by adding fresh noise: $\tilde{k} = k + \mathrm{Lap}(\frac{1}{\epsilon_0})$}.
	\item{Let $i \geq 0$ be the (only) integer that fits the inequalities \[ \gamma^{-i-1} < \prsmall{y = \mathcal{M}(d)} \leq \gamma^{-i} \ . \]}
	\item{Let $k'$ be the number of records $d_a \in D$ such that \[ \gamma^{-i-1} < \prsmall{y = \mathcal{M}(d_a)} \leq \gamma^{-i} \ . \]}
	\item{If $k' \ge \tilde{k}$ then return $\mathrm{pass}$, otherwise return $\mathrm{fail}$}.
\end{enumerate}
\label{def:randprivacytest}
\end{privacytest}
Here $z \sim \mathrm{Lap}(b)$ is a sample from the Laplace distribution $\frac{1}{2b} \exp{(\frac{-|z|}{b})}$ with mean $0$ and shape parameter $b > 0$.

\begin{theorem}[Differential Privacy of $\mathcal{F}$]\ \\
Let $\mathcal{F}$ denote Mechanism~\ref{def:mechanism} with the (randomized) Privacy Test~\ref{def:randprivacytest} and parameters $k \geq 1$, $\gamma > 1$, and $\varepsilon_0 > 0$. For any neighboring datasets $D$ and $D'$ such that $|D|, |D'| \geq k$, any set of outcomes $Y \subseteq \mathcal{U}$, and any integer $1 \leq t < k$, we have:
	\begin{align*}
		\pr{\mathcal{F}(D') \in Y} \leq e^{\varepsilon} \pr{\mathcal{F}(D) \in Y} + \delta \ ,
	\end{align*}
	for $\delta = e^{-\varepsilon_0 (k-t)}$ and $\varepsilon = \varepsilon_0 + \ln{(1+\frac{\gamma}{t})}$.
	\label{thm:dp}
	\label{thm:dpconn}
\end{theorem}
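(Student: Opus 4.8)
The plan is to decompose the output probability of $\mathcal{F}$ over the geometric buckets used by the privacy test and to track exactly how a single differing record perturbs this decomposition. For a fixed output $y$, group the records of $D$ by bucket: let bucket $i$ consist of all $d \in D$ with $\gamma^{-i-1} < \pr{y = \mathcal{M}(d)} \le \gamma^{-i}$, let $n_i$ be its cardinality, and let $S_i = \sum_{d \in \text{bucket } i} \pr{y = \mathcal{M}(d)}$ be the total generation mass it contributes. Since the per-record probabilities $\pr{y = \mathcal{M}(d)}$ depend only on the model and not on the dataset, the buckets are well defined independently of $D$, and a uniformly sampled seed $d$ passes the test precisely when $k + \mathrm{Lap}(1/\epsilon_0) \le n_{i(d)}$. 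Writing $g(n) = \pr{k + \mathrm{Lap}(1/\epsilon_0) \le n}$ for this pass probability, we get $\pr{\mathcal{F}(D) = y} = \frac{1}{|D|} \sum_i g(n_i)\, S_i$. Two elementary properties of the Laplace CDF drive the argument: $g(n+1) \le e^{\epsilon_0} g(n)$ for every $n$ (a unit shift of the threshold), and $g(n) \le \tfrac12 e^{-\epsilon_0 (k-n)}$ whenever $n \le k$ (the lower tail).

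Since neighboring datasets differ in one record $d^*$, exactly one bucket index $i^*$ (the bucket of $d^*$ for the output $y$) changes its count by one, and only $S_{i^*}$ changes; I would treat the two directions separately. In the removal direction ($D' = D \setminus \{d^*\}$), every bucket count in $D'$ is no larger than in $D$, so $g$ only decreases and $S_{i^*}$ only shrinks, and the sole inflation comes from the normalization $\tfrac{|D|}{|D|-1}$. Because $|D| \ge k$ and $t \le k-1$ with $\gamma > 1$, one checks $\tfrac{|D|}{|D|-1} \le 1 + \tfrac{\gamma}{t} \le e^{\varepsilon}$, so this entire direction is purely multiplicative and needs no $\delta$.

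The addition direction ($D' = D \cup \{d^*\}$) is where $\delta$ is spent. Here the normalization $\tfrac{|D|}{|D|+1} \le 1$ only helps, but the $i^*$ bucket grows: $n_{i^*} \mapsto n_{i^*}+1$ and $S_{i^*} \mapsto S_{i^*} + p^*$ with $p^* = \pr{y = \mathcal{M}(d^*)} \le \gamma^{-i^*}$. If $n_{i^*} \ge t$, then using $S_{i^*} > n_{i^*}\gamma^{-i^*-1}$ gives $\tfrac{S_{i^*}+p^*}{S_{i^*}} \le 1 + \tfrac{\gamma}{n_{i^*}} \le 1 + \tfrac{\gamma}{t}$, which combined with $g(n_{i^*}+1) \le e^{\epsilon_0} g(n_{i^*})$ bounds the $i^*$ term of $\pr{\mathcal{F}(D') = y}$ by $e^{\varepsilon}$ times the corresponding term of $\pr{\mathcal{F}(D) = y}$; all other buckets are bounded by the normalization factor alone. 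If instead $n_{i^*} < t$, I split the grown $i^*$ term into the mass from the shared records (bounded multiplicatively by $e^{\epsilon_0}\le e^{\varepsilon}$, since $\tfrac{1}{|D|+1}\le\tfrac{1}{|D|}$) and the mass $\tfrac{1}{|D|+1} g(n_{i^*}+1)\, p^*$ contributed by $d^*$ itself. Because $n_{i^*}+1 \le t < k$, the tail bound gives $g(n_{i^*}+1) \le \tfrac12 e^{-\epsilon_0(k-t)}$, so summing this leftover mass over all $y \in Y$ and using $\sum_y \pr{y = \mathcal{M}(d^*)} \le 1$ yields a total additive term at most $\tfrac12 e^{-\epsilon_0(k-t)} \le \delta$.

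Collecting the per-$y$ bounds and summing over $y \in Y$ then gives $\pr{\mathcal{F}(D') \in Y} \le e^{\varepsilon}\pr{\mathcal{F}(D) \in Y} + \delta$. The main obstacle is the sparse-bucket sub-case of the addition direction: one must peel off the probability mass injected by the new record — which has no counterpart in $\mathcal{F}(D)$ and therefore cannot be charged multiplicatively — and show the randomized threshold makes this mass exponentially small, all while keeping the shared mass and the normalization under the single factor $e^{\varepsilon}$. Conceptually, the crux is that the free parameter $t$ mediates a trade-off, inflating the multiplicative cost through $\ln(1 + \gamma/t)$ while shrinking the additive cost through $e^{-\epsilon_0(k-t)}$; exposing $t$ in the statement is precisely what lets one optimize this balance.
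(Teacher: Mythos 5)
Your proposal is correct and follows essentially the same route as the paper's proof: the same bucket/partition decomposition of $\pr{\mathcal{F}(D)=y}$, the same two Laplace tail facts (unit-shift bound $e^{\varepsilon_0}$ and lower-tail bound $e^{-\varepsilon_0(k-t)}$), the same case split on whether the modified bucket has at least $t$ records, and the same final summation over $y$ using $\sum_y \pr{y=\mathcal{M}(d^*)} \leq 1$. The only (harmless, slightly tighter) deviation is in the sparse-bucket case, where you peel off just the new record's mass as the additive term and charge the shared mass multiplicatively, whereas the paper charges the whole bucket's mass $q(D',j,y)$ additively and recovers the same $\delta$ by averaging over the $|D'|$ records.
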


The privacy level offered by Theorem~\ref{thm:dp} is meaningful provided $k$ is such that $\delta$ is sufficiently small. For example, if we want $\delta \leq \frac{1}{n^c}$ for some $c > 1$, then we can set $k \geq t + \frac{c}{\varepsilon_0} \ln{n}$. Here $t$ provides a trade-off between $\delta$ and $\varepsilon$.

The proof of Theorem~\ref{thm:dp} can be found in Appendix~\ref{app:connection}. 
Roughly speaking, the theorem says that, except with some small probability $\delta$, adding a record to a dataset cannot change the probability that any synthetic record $y$ is produced by more than a small multiplicative factor. The intuition behind this is the following.

Fix an arbitrary synthetic record $y$ produced by the mechanism on some dataset. Remark that given $y$, records are partitioned into disjoint sets according to their probabilities of generating $y$ (with respect to $\mathcal{M}$). That is, partition $i$ for $i=0,1,2\ldots$, contains those records $d$ such that $\gamma^{-(i+1)} < \pr{y = \mathcal{M}(d)} \leq \gamma^{-i}$. (We ignore records which have probability $0$ of generating $y$.)

Now suppose we add an arbitrary record $d'$ to the dataset. The probability of producing $y$ changes in two ways: (1) the probability that $y$ is generated increases because $d'$ may be chosen as seed, and (2) the probability that $y$ passes the privacy test increases because $d'$ is an additional plausible seed. Remark that this change only impacts whichever partition $d'$ falls into because the probability of passing the privacy test depends only on the number of plausible seeds in the partition of the seed.

Thus, we focus on the partition in which $d'$ falls. If that partition contains a small number of records compared to $k$ then introducing $d'$ could increase the probability of generating $y$ significantly, but the probability of passing the privacy test is very small. (The likelihood of passing the privacy test decreases exponentially the fewer plausible seeds are available compared to $k$.) In contrast, if the partition contains a number of records comparable to $k$ or larger, then the probability of generating $y$ increases only slightly (because there are already a large number of plausible seeds with similar probability of generating $y$ as $d'$). And, the probability of passing the privacy test increases by a multiplicative factor of at most $e^{\varepsilon_0}$ due to adding Laplacian noise. In both cases, the increase to the probability of producing $y$ due to adding $d'$ is small and can be bounded.
\section{Generative Model}\label{sec:gm}

In this section, we present our generative model, and the process of using it to generate synthetic data.  The core of our synthesizer is a probabilistic model that captures the joint distribution of attributes.  We learn this model from training data samples drawn from our real dataset $\ds{}$.  Thus, the model itself needs to be privacy-preserving with respect to its training set.  We show how to achieve this with differential privacy guarantees.

Let $\ds{S}$, $\ds{T}$, and $\ds{P}$ be three non-overlapping subsets of dataset $\ds{}$.  We use these datasets in the process of synthesis, structure learning, and parameter learning, respectively. 

\subsection{Model}\label{sec:gm:model}

Let $\{\rv{x}_1, \rv{x}_2, ..., \rv{x}_m \}$ be the set of random variables associated with the attributes of the data records in $\ds{}$.  Let $\mathcal{G}$ be a directed acyclic graph (DAG), where the nodes are the random variables, and the edges represent the probabilistic dependency between them.  A directed edge from $\rv{x}_j$ to $\rv{x}_i$ indicates the probabilistic dependence of attribute $i$ to attribute $j$.  Let $\pa(i)$ be the set of parents of random variable $i$ according to the dependency graph $\mathcal{G}$.  The following model, which we use in Section~\ref{sec:gm:synthesis} to generate synthetic data, represents the joint probability of data attributes.
\begin{align}
	\pr{ \rv{x}_1, ..., \rv{x}_m } = 
	\prod_{i=1}^m 
	\pr{ \rv{x}_i \given \{ \rv{x}_j \}_{\forall{j \in \pa(i)}} }
\label{eq:gm:model}
\end{align} 

This model is based on a structure between random variables, captured by $\mathcal{G}$, and a set of parameters that construct the conditional probabilities.  In Section~\ref{sec:gm:structure} and Section~\ref{sec:gm:parameters}, we present our differentially-private algorithms to learn the structure and parameters of the model from $\ds{}$, respectively.

\subsection{Synthesis}\label{sec:gm:synthesis}

Using a generative model, we probabilistically transform a real data record (called the seed) into a synthetic data record, by updating its attributes.  Let $\{x_1, x_2, ..., x_m \}$ be the values for the set of data attributes for a randomly selected record in the seed dataset $\ds{S}$.  Let $\omega$ be the number of attributes for which we generate new values.  Thus, we keep (i.e., copy over) the values of $m - \omega$ attributes from the seed to the synthetic data.  Let $\sigma$ be a permutation over $\{1, 2, ..., m\}$ to determine the re-sampling order of attributes. 

We set the re-sampling order $\sigma$ to be the dependency order between random variables.  More precisely, $\forall{j \in \pa(i)}$: $\sigma(j) < \sigma(i)$.  We fix the values of the first $m - \omega$ attributes according to $\sigma$ (i.e., the synthetic record and the seed overlap on their $\{ \sigma(1), ..., \sigma(m-\omega) \}$ attributes).  We then generate a new value for each of the remaining $\omega$ attributes, using the conditional probabilities \eqref{eq:gm:model}.  As we update the record while we re-sample, each new value can depend on attributes with updated values as well as the ones with original (seed) values. 

We re-sample attribute $\sigma(i)$, for $i > m-\omega$, as
\begin{align}
	x'_{\sigma(i)} \sim 
	\pr{ \rv{x}_{\sigma(i)} \given 
	& \{ \rv{x}_{\sigma(j)} = x_{\sigma(j)} \}_{\forall{j \in \pa(i), j \leq m-\omega}}, \nn\\
	& \{ \rv{x}_{\sigma(j)} = x'_{\sigma(j)} \}_{\forall{j \in \pa(i), j > m-\omega}} 
	}
\label{eq:gm:seedbased_synthesis}
\end{align}

In Section~\ref{sec:pd}, we show how to protect the privacy of the {\em seed} data record using our plausible deniability mechanisms.

\medskip \noindent \textbf{Baseline: Marginal Synthesis.} As a baseline generative model, we consider a synthesizer that (independently from any seed record) samples a value for an attribute from its marginal distribution.  Thus, for all attribute $i$, we generate $x_i \sim \prsmall{\rv{x}_i}$.  This is based on an assumption of independence between attributes' random variables, i.e., it assumes $\prsmall{\rv{x}_1, ..., \rv{x}_m} = \prod_{i=1}^m \prsmall{\rv{x}_i}$.

\subsection{Privacy-Preserving Structure Learning}\label{sec:gm:structure}

Our generative model depends on the dependency structure between random variables that represent data attributes.  The dependency graph $\mathcal{G}$ embodies this structure.  In this section, we present an algorithm that learns $\mathcal{G}$ from real data, in a privacy-preserving manner such that $\mathcal{G}$ does not significantly depend on individual data records.  

The algorithm is based on maximizing a scoring function that reflects how correlated the attributes are according to the data.  There are multiple approaches to this problem in the literature~\cite{margaritis2003learning}.  We use a method based on a well-studied machine learning problem: {\em feature selection}.  For each attribute, the goal is to find the best set of features (among all attributes) to predict it, and add them as the attribute's parents, under the condition that the dependency graph remains acyclic. 

The machine learning literature proposes several ways to rank features in terms of how well they can predict a particular attribute.  One possibility is to calculate the information gain of each feature with the target attribute.  The major downside with this approach is that it ignores the redundancy in information between the features.  We propose to use a different approach, namely Correlation-based Feature Selection (CFS)~\cite{hall1999correlation} which consists in determining the best subset of predictive features according to some correlation measure.  This is an optimization problem to select a subset of features that have high correlation with the target attribute and at the same time have low correlation among themselves.  The task is to find the best subset of features which maximizes a merit score that captures our objective.

We follow~\cite{hall1999correlation} to compute the merit score for a parent set $\pa(i)$ for attribute $i$ as
\begin{align}
	\mathrm{score}(\pa(i)) = \frac{\sum_{j\in\pa(i)} \mathrm{corr}(\rv{x}_i, \rv{x}_j)}{\sqrt{|\pa(i)| +  \sum_{j,k\in\pa(i)} \mathrm{corr}(\rv{x}_j, \rv{x}_k)}} , 
\label{eq:gm:merit}
\end{align}
where $|\pa(i)|$ is the size of the parent set, and $\mathrm{corr}()$ is the correlation between two random variables associated with two attributes.  The numerator rewards correlation between parent attributes and the target attribute, and the denominator penalizes the inner-correlation among parent attributes.  The suggested correlation metric in~\cite{hall1999correlation}, which we use, is the symmetrical uncertainty coefficient:
\begin{align}
	\mathrm{corr}(\rv{x}_i, \rv{x}_j) = 2 - 2 \frac{\mathrm{H}(\rv{x}_i, \rv{x}_j)}{\mathrm{H}(\rv{x}_i) + \mathrm{H}(\rv{x}_j)} ,
\label{eq:gm:correlation}
\end{align}
where $\mathrm{H}()$ is the entropy function.

The optimization objective in constructing $\mathcal{G}$ is to maximize the total $\mathrm{score}(\pa(i))$ for all attributes $i$.  Unfortunately, the number of possible solutions to search for is exponential in the number of attributes, making it impractical to find the optimal solution.  The greedy algorithm, suggested in~\cite{hall1999correlation}, is to start with an empty parent set for a target attribute and always add the attribute (feature) that maximizes the score.

There are two constraints in our optimization problem.  First, the resulting dependency graph obtained from the set of best predictive features (i.e., parent attributes) for all attributes should be acyclic.  This would allow us to decompose and compute the joint distribution over attributes as represented in \eqref{eq:gm:model}.  

Second, we enforce a maximum allowable complexity cost for the set of parents for each attribute.  The cost is proportional to the number of possible joint value assignments (configurations) for the parent attributes.  So, for each attribute $i$, the complexity cost constraint is
\begin{align}
	\mathrm{cost}(\pa(i)) = \prod_{j\in\pa(i)} |\rv{x}_j| \leq \texttt{maxcost}
\label{eq:gm:constraint}
\end{align}
where $|\rv{x}_j|$ is the total number of possible values that the attribute $j$ takes.  This constraint prevents selecting too many parent attribute combinations for predicting an attribute.  The larger the joint cardinality of attribute $i$'s parents is, the fewer data points to estimate the conditional probability $\pr{ \rv{x}_i \given \{ \rv{x}_j \}_{\forall{j \in \pa(i)}} }$ can be found.  This would cause overfitting the conditional probabilities on the data, that results in low confidence parameter estimation in Section~\ref{sec:gm:parameters}.  The constraint prevents this.

To compute the score and cost functions, we discretize the parent attributes.  Let $\bu{}$ be a discretizing function that partitions an attribute's values into buckets.  If the attribute is continuous, it becomes discrete, and if it is already discrete, $\bu{}$ might reduce the number of its bins.  Thus, we update conditional probabilities as follows.
\begin{align}
	\pr{ \rv{x}_i \given \{ \rv{x}_j \}_{\forall_{j \in \pa(i)} } } 
	\approx 
	\pr{ \rv{x}_i \given \{ \bu{\rv{x}_j} \}_{\forall_{j \in \pa(i)}} }
\label{eq:gm:discretization}
\end{align}
where the discretization, of course, varies for each attribute.  We update \eqref{eq:gm:merit} and \eqref{eq:gm:constraint} according to \eqref{eq:gm:discretization}.  This approximation itself decreases the cost complexity of a parent set, and further prevents overfitting on the data.

\subsubsection{Differential-Privacy Protection}\label{sec:gm:structure_dp}

In this section, we show how to safeguard the privacy of individuals whose records are in $\ds{}$, and could influence the model structure (which might leak about their data).  

All the computations required for structured learning are reduced to computing the correlation metric \eqref{eq:gm:correlation} from $\ds{}$.  Thus, we can achieve differential privacy~\cite{dwork2014algorithmic} for the structure learning by simply adding appropriate noise to the metric.  As, the correlation metric is based on the entropy of a single or a pair of random variables, we only need to compute the entropy functions in a differentially-private way.  We also need to make sure that the correlation metric remains in the $[0, 1]$ range, after using noisy entropy values.

Let $\tilde{H}(\rv{z})$ be the noisy version of the entropy of a random variable $\rv{z}$, where in our case, $\rv{z}$ could be a single or pair of random variables associated with the attributes and their discretized version (as presented in \eqref{eq:gm:discretization}).  To be able to compute differentially-private correlation metric in all cases, we need to compute noisy entropy $\tilde{H}(\rv{x}_i)$, $\tilde{H}(\bu{\rv{x}_i})$, $\tilde{H}(\rv{x}_i, \rv{x}_j)$, and $\tilde{H}(\rv{x}_i, \bu{\rv{x}_j})$, for all attributes $i$ and $j$.  For each of these cases, we generate a {\em fresh} noise drawn from the Laplacian distribution and compute the differentially-private entropy as
\begin{align}
	\tilde{H}(\rv{z}) = \mathrm{H}(\rv{z}) + \mathrm{Lap}(\frac{\Delta_H}{\varepsilon_H})
\label{eq:gm:entropy}
\end{align}
where $\Delta_H$ is the sensitivity of the entropy function, and $\varepsilon_H$ is the differential privacy parameter.

It can be shown that if $\rv{z}$ is a random variable with a probability distribution, estimated from $n_T = | \ds{T} |$  data records, then the upper bound for the entropy sensitivity is
\begin{align}
	\Delta_{H} \leq \frac{1}{n_T} [2 + \frac{1}{\ln(2)} + 2 \log_2{n_T}] = O(\frac{\log_2{n_T}}{n_T})
\label{eq:gm:entropy_sensitivity}
\end{align}

The proof of~\eqref{eq:gm:entropy_sensitivity} can be found in Appendix~\ref{app:proofs}.
Remark that $\Delta_{H}$ is a function of $n_T$ (the number of records in $\ds{T}$) which per se needs to be protected.  As a defense, we compute $\Delta_{H}$ in a differentially-private manner, by once randomizing the number of records
\begin{align}
	\tilde{n}_T = n_T + \mathrm{Lap}(\frac{1}{\varepsilon_{n_T}})
\label{eq:gm:n_sensitivity}
\end{align}
 
By using the randomized entropy values, according to \eqref{eq:gm:entropy}, the model structure, which will be denoted by $\tilde{\mathcal{G}}$, is differentially private.  In Section~\ref{sec:gm:dprivacy}, we use the composition theorems to analyze the total privacy of our algorithm for obtaining a differentially-private structure. 

\subsection{Privacy-Preserving Parameter Learning}\label{sec:gm:parameters}
Having a dependency structure $\tilde{\mathcal{G}}$, we need to compute the conditional probabilities for predicting each of the attributes given its parent set (see \eqref{eq:gm:model}).  This is a well-known problem in statistics.  In this section, we show how to learn the parameters that represent such conditional probabilities, from $\ds{P}$, in a differentially private manner.

The problem to be solved is to first learn a prior distribution over the parameters of the conditional probabilities.  To do so, we learn the hyper-parameters (the parameters of the prior distribution over the model's parameters) from data.  Only then, we can compute the parameters that form the conditional probabilities from the prior distribution.

Let us take the example of computing the parameters for predicting discrete/categorical attributes.  In this case, we assume a multinomial distribution over the attribute's values (that fall into different bins).  The conjugate prior for multinomials comes from a Dirichlet family.  The Dirichlet distribution assigns probabilities to all possible multinomial distributions, according to the statistics obtained from a set of data records.  

Let $|\rv{x}_i|$ be the number of distinct values that attribute $i$ can take.  The probability of some multinomial distribution parameters $\vec{p\,}^{c}_i = p^{c}_{i,1}, p^{c}_{i,2}, ..., p^{c}_{i,|\rv{x}_i|}$ to predict attribute $i$, under configuration $c$ for $\pa(i)$, is
\begin{align}
	\pr{\vec{p\,}^{c}_i \given \tilde{\mathcal{G}}, \ds{P}} = 
	\mathrm{Dir}(\vec{\alpha\,}^c_i + \vec{n\,}^c_i)
\label{eq:gm:multinomial_probability}
\end{align}
where $\vec{\alpha\,}^c_i$ is the vector of default hyper-parameters for the Dirichlet distribution, and $\vec{n\,}^c_i$ is the vector for the number of data records in $\ds{P}$ with $\padp(i)$ configuration $c$ with different values for attribute $i$ (i.e., element ${n\,}^c_{i,l}$ is the number of records for which $x_i = l$ and $\padp(i)$ configuration is $c$).  The Dirichlet distribution is computed as
\begin{align}
	\mathrm{Dir}(\vec{\alpha\,}^c_i + \vec{n\,}^c_i) = 
	\prod_{i=1}^m \prod_{c=1}^{\#c} \Gamma(\alpha^c_i + n^c_i) \prod_{l=1}^{|\rv{x}_i|} \frac{(p^c_i)^{\alpha^c_{i,l} + n^c_{i,l}-1}}{\Gamma(\alpha^c_{i,l} + n^c_{i,l})}
\label{eq:gm:dirichlet}
\end{align}
where $\alpha^c_i = \sum_l \alpha^c_{i,l}$, and $n^c_i = \sum_l n^c_{i,l}$, and the number of configurations $\#c$ is $\prod_{j\in\padp(i)} |\rv{\bu{x_j}}|$, which according to constraint \eqref{eq:gm:constraint} can at most be $\texttt{maxcost}$.  

Learning the parameters of the model, in the case of a Dirichlet prior for multinomial distribution, is simply computing $\vec{n\,}^c_i$ from the data records in $\ds{P}$.  Given the probability distribution \eqref{eq:gm:multinomial_probability} over the multinomial parameters, we can compute the most likely set of parameters as 
\begin{align}
	p^c_{i,l} = \frac{\alpha^c_{i,l} + n^c_{i,l}}{\alpha^c_{i} + n^c_{i}}
\label{eq:gm:mlparameters}
\end{align}
or, we can sample a set of multinomial parameters according to \eqref{eq:gm:dirichlet}.  This is what we do in our generative model, in order to increase the variety of data samples that we can generate. 

Note that for computing the marginal distributions, that are needed for the baseline, we perform the same computations by setting the parent sets to be empty.

If an attribute is continuous, we can learn the parameters of a Normal distribution or learn a regression model from our data to construct its conditional probability.  We omit the details here (as in the dataset we evaluate in Section~\ref{sec:evaluation} all attributes are discrete).

\subsubsection{Differential-Privacy Protection}\label{sec:gm:parameters_dp}

The parameters of the conditional probabilities depend on the data records in $\ds{P}$, thus they can leak sensitive information about individuals who contributed to the real dataset.  In this section, we show how to learn parameters of the attribute conditional probabilities (i.e., $\vec{p\,}^{c}_i$ values) with differential privacy guarantees.

Note that in \eqref{eq:gm:multinomial_probability}, the only computations that are dependent on $\ds{P}$ are the $\vec{n\,}^c_i$ counts (for all $c$ and $i$).  To find the variance of the noise to be added to these counts, to achieve differential privacy, we need to compute their sensitivity with respect to one individual's data record.  

Suppose we are computing the parameters associated with predicting a given attribute $i$ given its parent set $\padp(i)$.  Note that adding a record to $\ds{P}$ increases exactly a single component ${n}^c_{i,l}$, for which it matches value $l$ for attribute $i$ and configuration $c$ for its parent set.  So, only one single element among all $\#c \times |\rv{x}_i|$ elements of $\vec{n}_i = \vec{n\,}^1_i, \vec{n\,}^2_i, ..., \vec{n\,}^{\#c}_i$ changes.  This implies that the $\mathrm{L1}$ sensitivity of $\vec{n}_i$ is $1$.  Consequently, a random noise drawn from $\mathrm{Lap}(\frac{1}{\varepsilon_p})$ can be added to each component of $\vec{n}_i$ independently.  More precisely, for any attribute $i$ value $l$, and configuration $c$, we randomize counts as
\begin{align}
	\tilde{n}^c_{i,l} = \max(0, n^c_{i,l} + \mathrm{Lap}(\frac{1}{\varepsilon_p}))
\end{align}
and use them to compute \eqref{eq:gm:multinomial_probability} with differential privacy.

\subsection{Differential Privacy Analysis}\label{sec:gm:dprivacy}

In this section, we compute the differential privacy level that we can guarantee for the whole dataset $\ds{}$ for learning the structure and the parameters of the model.  We compute the total $(\epsilon, \delta)$ privacy by composing the differentially-private mechanisms in Section~\ref{sec:gm:structure_dp} and Section~\ref{sec:gm:parameters_dp}.

Remark that we often protect the output $f(x)$ of some function $f$ by adding noise from $\mathrm{Lap}(\frac{\Delta_f}{\varepsilon})$, where $\Delta_f$ is the $\mathrm{L1}$ sensitivity of $f$. This mechanism is known as the Laplace mechanism and it satisfies $\varepsilon$-differential privacy (Theorem 3.6 of~\cite{dwork2014algorithmic}).

Thus the $m (m+1)$ entropy values, $\tilde{H}(\rv{z})$, needed for structure learning (Section~\ref{sec:gm:structure_dp}) are obtained in an a way that satisfies $\varepsilon_H$-differential privacy. This is also the case for the number of records $n_T$, i.e., it satisfies $\varepsilon_{n_T}$-differential privacy. Similarly, the counts $n^c_{i,l}$ parameters learned for each configuration (Section~\ref{sec:gm:parameters_dp}) satisfy $\varepsilon_p$-differential privacy.

For structure learning, we make use of both sequential composition (Theorem~\ref{thm:dp:seqc}) and advanced composition (Theorem~\ref{thm:dp:advc}).
Specifically, we use advanced composition for the $m (m+1)$ entropy values and sequential composition with the number of records. That is, the overall privacy achieved (of structure learning) is $(\varepsilon_L,\delta_L)$-differential privacy for a fixed $\delta_L \ll \frac{1}{n_T}$ and $\varepsilon_L = \varepsilon_{n_T} + \varepsilon_H \sqrt{2 m(m+1) \ln{(\delta_L^{-1})}}+ m(m+1) \varepsilon_H (e^{\varepsilon_H} - 1)$.

For the parameter learning (as explain in Section~\ref{sec:gm:parameters_dp}), for a given attribute $i$, the $\mathrm{L1}$ sensitivity of all configurations of the parent set of $i$, i.e., $\padp(i)$, is $1$. The overall privacy achieved (of parameter learning) is $(\varepsilon_P,\delta_P)$-differential privacy using advanced composition over the $m$ attributes. Here, $\delta_P \ll \frac{1}{n_p}$ (where $n_p$ is the number of records in $\ds{P}$) and $\varepsilon_P = \varepsilon_p \sqrt{2 m \ln{(\delta_P^{-1})}} + m \varepsilon_p (e^{\varepsilon_p} - 1)$.

Given that $\ds{T}$ and $\ds{P}$ are non-overlapping, the privacy obtained for the generative model is differentially private with parameters $(\max\{\varepsilon_L, \varepsilon_P\}, \max\{\delta_L,\delta_P\})$.  Due to random subsampling of $\ds{T}$ and $\ds{P}$ from $\ds{}$, the privacy parameters can be further improved by using the amplification effect of sampling (Theorem~\ref{thm:dp:amplif}) to obtain $(\varepsilon,\delta)$-differential privacy.
\section{Data}\label{sec:data}
For validation, we use the 2013 American Community Survey (ACS)~\cite{acsweb} from the U.S. Census Bureau. The dataset contains upwards of 3 million of individual records. Each record includes a variety of demographics attributes such as age, sex, race, as well as attributes related to the individual's income such as yearly income in USD.

The ACS dataset has been used for various purposes ranging from examining the relationship between education and earnings~\cite{julian2011education} to looking at current language use patterns of children of immigrants~\cite{mora2005language}. Furthermore, the prominent UCI Adult dataset, which provides a well-established benchmark for machine learning tasks, was extracted from the 1994 Census database. The 2013 ACS dataset contains similar attributes so we process it in a manner similar to how the Adult dataset was extracted. In particular, we extract the same attributes whenever possible.

As pre-processing, we discard records with missing or invalid values for the considered attributes (Table~\ref{tbl:acs13attrs}). 
Table~\ref{tbl:acs13stats} shows some statistics of the data cleaning and extracted dataset. This is a highly dimensional dataset despite having only $11$ attributes, there are more than half a trillion \emph{possible} records and out of the roughly $1.5$ million records obtained after cleaning, approximately $2/3$ are unique.
\begin{table}
    \caption{\small Pre-processed ACS13 dataset attributes.}
    \vspace{4pt}
	{\scriptsize
	\resizebox{\columnwidth}{!}{%	
    \begin{tabular}{|l|l|l|}
    \hline
	    Name & Type & Cardinality (Values)         \\ \hline \hline
		Age (AGEP) & Numerical & $80$ ($17$ to $96$)         \\ \hline
		Workclass (COW) & Categorical & $8$        \\ \hline
		Education (SCHL) & Categorical & $24$         \\ \hline
		Martial Status (MAR) & Categorical & $5$       \\ \hline
		Occupation (OCCP) & Categorical & $25$         \\ \hline
		Relationship (RELP) & Categorical &$18$         \\ \hline
		Race (RAC1P) & Categorical & $5$         \\ \hline
		Sex (SEX) & Categorical & $2$ (male or female)         \\ \hline
		Hours Worked per Week (WKHP) & Numerical & $100$ ($0$ to $99$+)         \\ \hline
		World Area of Birth (WAOB) & Categorical & $8$     \\ \hline
		Income Class (WAGP) & Categorical & $2$ ($\leq 50$K, $> 50$K)[USD]         \\ \hline
    \end{tabular}
    }% end of resizebox
    }
	\vspace{-6pt}
    \label{tbl:acs13attrs}
\end{table}

\begin{table}
    \caption{\small ACS13 data extraction and cleaning statistics.}
    \vspace{4pt}
	{\scriptsize	
	\resizebox{\columnwidth}{!}{%	
    \begin{tabular}{|p{2.7cm}|p{4.9cm}|}
    \hline
	    Records                        & $3,132,796$ (clean: $1,494,974$)         \\ \hline
	    Attributes                     & $11$ (numerical: $2$, categorical: $9$)    \\ \hline
	    Possible Records 			   & $540,587,520,000$ ($\approx 2^{39}$)       \\ \hline
	    Unique Records                 & $1,022,718$ ($68.4\%$)                  \\ \hline
	    Classification Task            & Income class \\ \hline
    \end{tabular}
    }% end of resizebox
    }
	\vspace{-6pt}
    \label{tbl:acs13stats}
\end{table}

We bucketize (Section~\ref{sec:gm:structure}) values of the age attribute in bins (i.e., buckets) of $10$, i.e., $17$ to $26$, $27$ to $36$, etc. (Following the rules used to extract the Adult dataset, we only consider individuals older than $16$.) We also bucketize the values of: hours worked per week (HPW), in bins of $15$ hours; education, to aggregate education level below a high-school diploma in a single bin, and high-school diploma but not college into (another) single-bin. Bucketization is performed based on the data format and the semantics of attributes (and thus is privacy-preserving). It is done only for structured learning (Section~\ref{sec:gm:structure}); both the input and output data format remain the same.
\section{Synthetics Generator Tool}\label{sec:tool}
The synthetic generator~\cite{tool} is implemented as C++ tool which takes as input: a dataset represented as a CSV file, a few metadata text files describing the dataset, and a config file. As output, the tool produces a synthetic dataset of the requested size and some metadata.

The generation process is defined by the config file, i.e., parameters defined within in control various aspects of the generation process. The parameters are the privacy parameters $k$, $\gamma$, $\varepsilon_0$, and also parameters of the generative model such as $\omega$. In addition, the tool takes two optional parameters to control the privacy test: \texttt{max\_plausible} and \texttt{max\_check\_plausible}, which allow the test to terminate early. Specifically, the implementation initially sets $k' = 0$, and iterates over the records of $D$ in a random order, incrementing $k'$ for each plausible seed record $d_a$ encountered. The process terminates whenever $k' \geq \texttt{max\_plausible}$ or if \texttt{max\_check\_plausible} records have been examined (whichever occurs first). Note that this affects performance (but not privacy); lower values lead to faster generation time in cases where plausible seeds are abundant, at the cost of fewer synthetics passing the test (potentially lowering utility).

The synthesis process, given a chosen seed, is independent of other seeds (Section~\ref{sec:pd}); so the generation process itself is embarrassingly parallel. One hurdle with running multiple concurrent instances is implementing differentially-private parameters learning (Section~\ref{sec:gm:parameters_dp}). In general, the number of configuration (in the sense of Section~\ref{sec:gm:parameters}) of the model is too large (i.e., exponential in the number of attributes) to learn the model as a pre-processing step. So we design the tool to learn the model for each configuration as it encounters it. To ensure that the privacy guarantee holds we set the RNG seed number to be a deterministic function (i.e., a hash) of the configuration.
\begin{figure}
	\centering
	\includegraphics[width=0.95\linewidth]{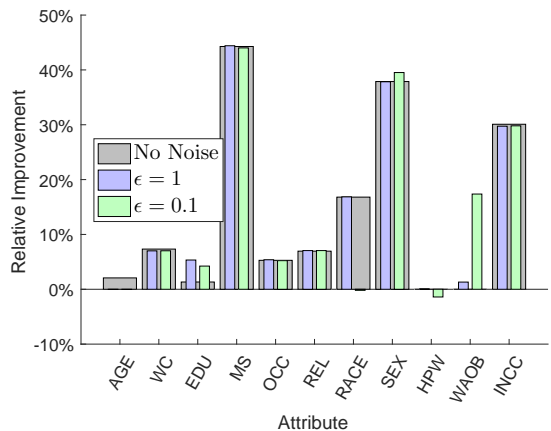}
	\vspace{-6pt}
	\caption{\small Relative Improvement of Model Accuracy of the un-noised, $\varepsilon=1$-DP, and $\varepsilon=0.1$-DP models, with respect to the baseline (marginals).
	Overall, the improvement for $\varepsilon=1$ or $\varepsilon=0.1$ is comparable to that for the un-noised version. Adding noise to achieve DP for structure learning (Section~\ref{sec:gm:structure}) can lead to a different acyclic graph of the model. (This is why there is a significant difference in improvement for attributes RACE and WAOB between $\varepsilon=1$-DP and $\varepsilon=0.1$-DP.)}
	\label{fig:acs_me_relimp}
	\vspace{-8pt}
\end{figure}
\section{Evaluation}\label{sec:evaluation}
We feed the 2013 ACS dataset (Section~\ref{sec:data}) as input to our tool and generate millions of synthetic records. We start with a description of the experimental setup. The evaluation itself is divided into four logical parts: (Section~\ref{subsec:eval_stat}) statistical measures (how good are the synthetics according to well-established statistical metrics); (Section~\ref{subsec:eval_ml}) machine learning measures (how good are the synthetics for machine learning tasks, specifically classification); (Section~\ref{subsec:eval_game}) distinguishing game (how successful is an adversary at distinguishing between a real record and a synthetic one); and (Section~\ref{subsec:eval_perf}) performance measures (how computationally complex it is to generate synthetics).

\subsection{Setup}\label{subsec:evalsetup}
To achieve differential privacy we sampled the input dataset into disjoint sets of records. Each of $\ds{T}$ and $\ds{P}$ contains roughly $280,000$ records, whereas $\ds{S}$ contains roughly $735,000$ records (Section~\ref{sec:gm:dprivacy}). For differential privacy of the generative model, we set $\varepsilon=1$ (though we give some results for $\varepsilon=0.1$) and always set $\delta$ to be at most $2^{-30} \approx 10^{-9}$.

We typically compare the quality of our generated synthetics with real records (coming from the input dataset) and privacy-preserving marginals (Section~\ref{sec:gm:synthesis}) which we refer to as \emph{reals} and \emph{marginals}, respectively. The synthetics we generate are referred by their generation parameters (e.g., $\omega = 10$). Unless otherwise stated, we set $k=50$, $\varepsilon_0=1$, $\gamma = 4$, and $\omega$ is set to vary between $5$ and $11$. 

We maintain a testing set of roughly $100,000$ records. Evaluation of classifiers (in this section) uses at least $100,000$ records for training and a (disjoint) testing set of size at least $30$\% of the size of the aforementioned training set.

\begin{figure}
	\centering
	\includegraphics[width=0.95\linewidth]{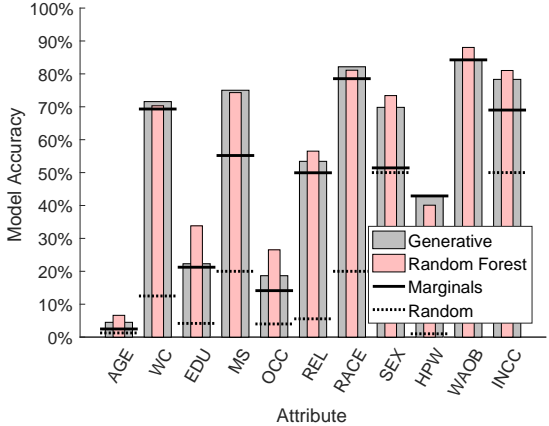}
	\vspace{-6pt}
	\caption{\small Model Accuracy. The difference between the random forest accuracy and the marginals accuracy indicates how informative the data is about each attribute.}
	\label{fig:acs_me_acc}
	\vspace{-12pt}
\end{figure}
\subsection{Statistical Measures}\label{subsec:eval_stat}
We evaluate the quality of the synthetics in terms of their statistical utility, i.e., the extent to which they preserve the statistical properties of the original (input) dataset. We can do this at the level of the generative model (Section~\ref{sec:gm:model}) itself. Concretely, we directly quantify the error of the privacy-preserving generative model before any synthetic record is generated. We do this for each attribute by repeatedly selecting a record from the input dataset (uniformly at random) and using the generative model to find the most likely attribute value (of that attribute) given the other attributes. The generative model error is then measured as the proportion of times that the most likely attribute value is \emph{not} the correct (i.e., original) one. We repeat this procedure millions of times to quantify the average error of the model for each attribute. Because the generative model is made differentially private by adding noise (Section~\ref{sec:gm:parameters_dp}) we additionally repeat the whole procedure $20$ times (learning a different private model each time) and take the average.

The results are shown in Figures~\ref{fig:acs_me_relimp} and~\ref{fig:acs_me_acc}. Figure~\ref{fig:acs_me_relimp} shows the relative decrease in model error (i.e., improvement of model accuracy) over the (privacy-preserving) marginals; it shows this improvement for the un-noised, $(\varepsilon\!\!=\!\!1)$-differential private, and $(\varepsilon\!=\!0.1)$-differential private generative models. There is a clear accuracy improvement over marginals, in addition to a low decrease in improvement between the un-noised model and the $\varepsilon\!=\!1$ and $\varepsilon\!=\!0.1$ noisy versions.  

Figure~\ref{fig:acs_me_acc} shows the accuracy of the un-noised generative model against the (un-noised) marginals, random guessing (baseline), and the best classifier we could find (trained on as many records as the generative model), the random forest (RF). While RF's accuracy is sometimes higher than that of the generative model, the accuracy of the latter is in many cases significantly higher than that of marginals and random guessing.  We conclude that while the proposed generative model does not perform as well as RF (though making RF differentially private would certainly lower its performance) it does perform significantly better than marginals (or random guessing).
\begin{figure}
	\centering
	\includegraphics[width=0.95\linewidth]{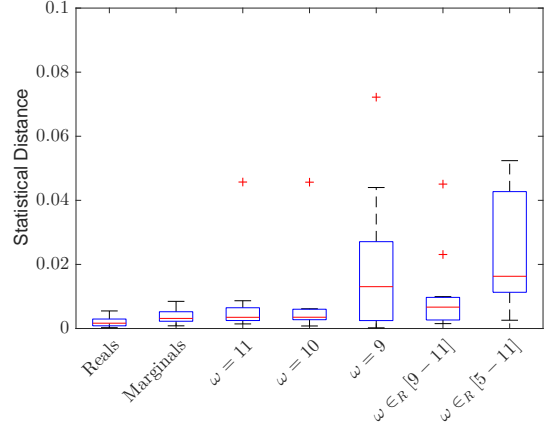}
	\vspace{-6pt}
	\caption{\small Statistical Distance for individual attributes of two distributions: reals and (other) reals; reals and marginals; reals and synthetics (for varying $\omega$). The smaller the statistical distance the more information is preserved. The distance of reals and $\omega = 11$ and $\omega = 10$ synthetics is similar to that of reals and marginals.}
	\label{fig:acs_single_sd}
	\vspace{-8pt}
\end{figure}

In addition to the error of the generative model, we can more directly evaluate the extent to which the generated synthetics preserve the statistical properties of the original (input) dataset. To do this, we compare the probability distributions of the synthetics with the reals and marginals. Specifically, for reals, marginals and synthetics datasets, we compute the distribution of each attribute and of each pair of attributes. We compare each of these distributions to those computed on (other) reals and quantify their distance. We use a well-established statistical distance metric called ``the'' statistical distance (a.k.a. total variation distance~\cite{gibbs2002choosing,levin2009markov}).

The results are shown in Figures~\ref{fig:acs_single_sd} and~\ref{fig:acs_pairs_sd}, where Figure~\ref{fig:acs_single_sd} shows box-and-whisker plots for the distance of the distributions of each attribute separately, and Figure~\ref{fig:acs_pairs_sd} shows box-and-whisker plots for the distance of the distributions of all pairs of attributes.
While marginals do well for single attribute and sometimes outperform our synthetics (though the statistical distance for all datasets is small), synthetics clearly outperform marginals for pairs of attributes. We conclude that the generated synthetics preserve significantly more statistical information than marginals.

\subsection{Machine Learning Measures}\label{subsec:eval_ml}
In addition to preserving statistical properties of the original (input) dataset, the synthetics should also be suitable to various machine learning tasks. In particular, given a learning task, we can evaluate the extent to which synthetics are suitable replacements for a real dataset. For the ACS dataset, a natural and well-establish classification task is to predict a person's income class (i.e., $\geq 50$K or $< 50$K) using the other attributes as features (Section~\ref{sec:data}).
\begin{figure}
	\centering
	\includegraphics[width=0.95\linewidth]{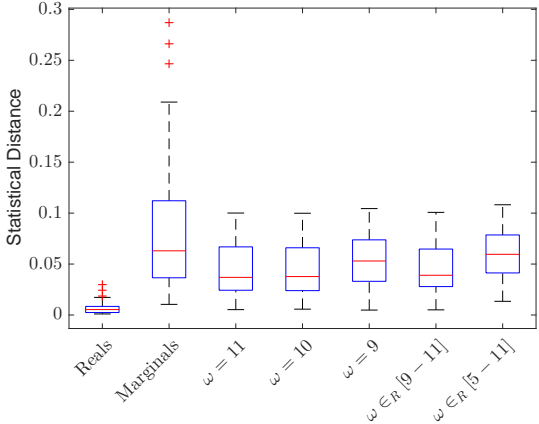}
	\vspace{-6pt}
	\caption{\small Statistical Distance for pairs of attributes of two distributions: reals and (other) reals; reals and marginals; reals and synthetics (for varying $\omega$). The smaller the statistical distance the more information is preserved. The distance of reals and synthetics is significantly smaller than that of reals and marginals.}
	\label{fig:acs_pairs_sd}
	\vspace{-8pt}
\end{figure}

We train various classifiers on the synthetic datasets and on the real (input) dataset. We then compare: the classification accuracy obtained, and the agreement rate of the learned classifiers. Specifically, for two classifiers trained on different datasets (but with the same classification task), we define the \emph{agreement rate} to be the percentage of records for which the two classifiers make the same prediction (regardless of whether the prediction is correct). Given that we look at the agreement rate of classifiers trained on reals and synthetics, the agreement rate reveals the extent to which the classifier trained on synthetic data has learned the \emph{same} model as the classifier trained on real data.

Table~\ref{tbl:acs13_classif_comp} shows the obtained results for three (best) classifiers: Classification Tree (Tree), Random Forest (RF), and AdaBoostM1 (Ada). The accuracy and agreement rate are calculated as the average over $5$ independent runs, that is, for each run, we use different (randomly sampled) training and testing datasets. Overall, we see that both the accuracy and the agreement rates of the synthetics are significantly closer to that of the reals than the marginals are.

In addition to comparing the best classifiers trained on real data versus those trained on synthetic data, we can also compare privacy-preserving classifiers trained on real data versus non-private classifiers trained on (privacy-preserving) synthetic data.  In particular, Chaudhuri et al.~\cite{chaudhuri2011differentially} propose two techniques based on empirical risk minimization to train logistic regression (LR) and support vector machines (SVM) binary classifiers: output perturbation (noise is added to the learned model), and objective perturbation (noise is added to the objective function of the minimization problem). To train such classifiers, we first pre-process our datasets following the instructions in~\cite{chaudhuri2011differentially}: we transform each categorical attribute into an equivalent set of binary attributes, and normalize features so that each feature takes values in $[0,1]$ and subsequently further normalize each training example such that its norm is at most $1$. The target attribute for classification is again the person's income class. The method proposed in~\cite{chaudhuri2011differentially} has two parameters: the privacy budget $\varepsilon$ which we set to $1$ (the same as for our generative model), and $\lambda$ which is a regularization parameter. We use the code of~\cite{chaudhuri2011differentially}, which we obtain courtesy of the authors, to train the LR and SVM classifiers. Because the classification models vary greatly depending on $\lambda$, we vary its value in the set $\{10^{-3},10^{-4},10^{-5}, 10^{-6}\}$ and (optimistically) pick whichever value maximizes the accuracy of the non-private classification model.

We report the accuracy obtained in each case in Table~\ref{tbl:acs13_classif_comp_lr_svm}, where we contrast non-private, output perturbation DP, and objective perturbation DP classifiers trained on real data with non-private classifiers trained on our synthetic datasets (for various values of $\omega$). Remark that for the case $\omega=11$, for example, this is a fair comparison as the obtained LR and SVM classifiers are $\varepsilon=1$-DP and thus provides the exact same privacy guarantee as the output perturbation and objective perturbation LR and SVM classifiers. Non-private LR and SVM classifiers trained on our (privacy-preserving) synthetic datasets are competitive with differentially private LR and SVM classifiers trained on real data.
\begin{table}
    \caption{\small Classifier Comparisons. The agreement rate is the proportion of times that the classifier makes the same prediction as a classifier trained on real data.}
    \vspace{4pt}
	{\scriptsize
	\resizebox{\columnwidth}{!}{%
	    \begin{tabular}{l|l|l|l||l|l|l|}
	    \cline{2-7}
	    ~            & \multicolumn{3}{|c||}{Accuracy} & \multicolumn{3}{|c|}{Agreement Rate} \\
	    \cline{2-7}
	    ~            						& Tree & RF & Ada & Tree & RF & Ada \\ \hline
	    \multicolumn{1}{|l|}{Reals} 		& 77.8\% & 80.4\%  & 79.3\% & 80.2\%  & 86.4\% & 92.4\% \\ \hline
	    \multicolumn{1}{|l|}{Marginals} 	& 57.9\% & 63.8\% & 69.2\% & 58.5\% & 65.4\% & 75.6\% \\ \hline
	    \multicolumn{1}{|l|}{$\omega=11$} 	& 72.4\% & 75.3\% & 78.0\% & 73.9\% & 79.0\% & 83.0\% \\ \hline
	    \multicolumn{1}{|l|}{$\omega=10$}  	& 72.3\% & 75.2\% & 78.1\% & 73.8\% & 78.9\% & 83.6\% \\ \hline
	    \multicolumn{1}{|l|}{$\omega=9$} 	& 72.4\% & 75.2\% & 77.5\% & 73.9\% & 79.2\% & 82.4\%\\ \hline
	    \multicolumn{1}{|l|}{$\omega \in_R [9-11]$} 	& 72.3\% & 75.2\% & 78.1\% & 73.7\% & 79.0\% & 83.9\%\\ \hline
	    \multicolumn{1}{|l|}{$\omega \in_R [5-11]$} 	& 72.1\% & 75.2\% & 78.1\% & 73.6\% & 79.2\% & 83.3\%\\ \hline
	    \end{tabular}
    	}%
    }

    \label{tbl:acs13_classif_comp}
    \vspace{-8pt}
\end{table}

We emphasize that the results should be interpreted in favor of our proposed framework. Indeed, the classifiers trained on our privacy-preserving synthetics outperforms $\varepsilon$-DP LR classifier and only achieves about $1\%$ lower accuracy than the objective-perturbation $\varepsilon$-DP SVM.  This is significant because the technique to train the $\varepsilon$-DP LR and SVM is specifically optimized for that task. In contrast, our synthetics are \emph{not} specifically generated to optimize any particular classification task; instead the general objective is to preserve the statistical properties of real data.
\begin{table}
    \caption{\small Privacy-Preserving Classifier Comparisons.}
    \vspace{4pt}
\begin{center}
	{\scriptsize
	\resizebox{0.75\columnwidth}{!}{%
	    \begin{tabular}{l|c|c|}
	    \cline{2-3}
	    ~            						& LR & SVM \\ \cline{2-3} \bottomrule % \hline % \hline
	    \multicolumn{1}{|l|}{Non Private} 	& 79.9\% & 78.5\% \\ \hline
	    \multicolumn{1}{|l|}{Output Perturbation} 	& 69.7\% & 76.2\% \\ \hline
	    \multicolumn{1}{|l|}{Objective Perturbation} 	& 76.3\% & 78.2\% \\ \toprule \bottomrule
	    
	    \multicolumn{1}{|l|}{Marginals} & 68.9\% & 68.9\% \\ \toprule \bottomrule
	    
	    \multicolumn{1}{|l|}{$\omega=11$} 	& 77.6\% & 77.2\% \\ \hline
	    \multicolumn{1}{|l|}{$\omega=10$}  	& 77.7\% & 77.1\% \\ \hline
	    \multicolumn{1}{|l|}{$\omega=9$} 	& 77.5\% & 77.1\% \\ \hline
	    \multicolumn{1}{|l|}{$\omega \in_R [9-11]$} 	& 77.5\% & 76.9\% \\ \hline
	    \multicolumn{1}{|l|}{$\omega \in_R [5-11]$} 	& 77.7\% & 77.3\% \\ \toprule
	    \end{tabular}
    	}%
    }
    \end{center}
    \label{tbl:acs13_classif_comp_lr_svm}
    \vspace{-12pt}
\end{table}

\subsection{Distinguishing Game}\label{subsec:eval_game}
A different way to evaluate the quality of synthetic datasets is to quantify the extent to which the synthetics can ``pass off'' as real records. In other words, we can imagine a game in which the participant is given a random record either from a real dataset or a synthetic dataset (but doesn't know which) and is asked to distinguish between the two possibilities. In this case, the utility is measured by how likely a sophisticated participant (e.g., a well-established learning algorithm) is to make a mistake, i.e., confuse a synthetic record with a real record or vice-versa.

For our purpose the role of the participant is played by the two best classifiers (those that best distinguish synthetics from reals): Random Forest (RF) and Classification Tree (Tree). Specifically, we provide $50,000$ records from both a real dataset and a synthetic dataset (i.e., $100,000$ total) as training examples to the (binary) classifier. We then evaluate the accuracy on a $50$\% mix of real and synthetic records which were not part the training set. Table~\ref{tbl:acs_dfs} shows the results: both classifiers obtain reasonably high (79.8\% and 73.2\%) accuracy in distinguishing marginals from real records. However, both classifiers obtain much lower accuracy (i.e., 63\%) when trying to distinguish synthetics from reals.
\begin{table}
    \caption{\small Distinguishing Game. Random Forest (RF) and Classification Tree (Tree) can easily distinguish marginals from reals but perform significantly less well when trying to distinguish synthetics from reals.}
    \vspace{4pt}
	{\scriptsize
	\resizebox{\columnwidth}{!}{%
	    \begin{tabular}{l|c|c|c|c|c|c|c|}
	    \cline{2-8}
		    ~    & \multirow{2}{*}{Reals} & \multirow{2}{*}{Marginals} & \multicolumn{5}{c|}{$\omega$ $=$ or $\in_R$} \\ \cline{4-8}
		    ~    &   &  & $11$ & $10$ & $9$ & $[9-11]$ & $[5-11]$ \\ \cline{2-8} \bottomrule
		    \multicolumn{1}{|l|}{RF} & 50\% & 79.8\% & 62.3\% & 61.8\% & 63.0\% & 60.1\% & 61.4\% \\ \hline
		    \multicolumn{1}{|l|}{Tree} & 50\% & 73.2\% & 58.9\% & 58.6\% & 59.8\% & 57.9\% & 58.4\% \\ \toprule
		\end{tabular}
    	}%
    }

    \label{tbl:acs_dfs}
    \vspace{-8pt}
\end{table}
\begin{figure}
	\centering
	\includegraphics[width=0.95\linewidth]{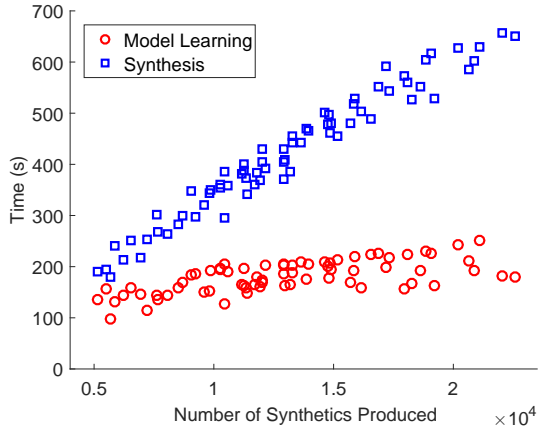}
	\caption{\small Synthetic Generation Performance. The parameters are: $\omega=9$, $k=50$, $\gamma=4$. The time to generate $10,000$ synthetic records on a single-core is less than $10$ minutes. Thus, in the same time frame we can generate $1$ million records with $100$ parallel instances.}
	\label{fig:acs_time_scatter}
	\vspace{-6pt}
\end{figure}
\subsection{Performance Measures}\label{subsec:eval_perf}
In addition to how much utility they preserve, synthetics also need to be easy to generate. The generation is a parallel process, so we measure the time taken for both the learning of the privacy-preserving generative model (model learning) and the synthetics generation (synthesis) itself. Figure~\ref{fig:acs_time_scatter} shows the time taken to produce various number of synthetics records (totaling over $1$ million). The parameters \texttt{max\_plausible} and \texttt{max\_check\_plausible} (Section~\ref{sec:tool}) were set to $100$ and $50000$ respectively. The machine used for the experiment runs Scientific Linux and is equipped with an Intel Xeon E5-2670 processor ($2.60$GHz) with $32$ processing units and $128$GB of RAM. We ran $96$ instances ($16$ in parallel at a time) and picked a random maximum runtime for each instance between $3$ and $15$ minutes.

The generator outputs all synthetics produced regardless of whether they pass the privacy test. Naturally, only those which pass the test should to be released. Thus, the extent to which we can synthesize large (privacy-preserving) datasets depends on how easy it is to find synthetics that pass the privacy-test (Section~\ref{sec:pd}). To evaluate this, we set $\gamma = 2$ and $\texttt{max\_check\_plausible} = 100,000$, and vary $k$ and $\omega$. We measure the proportion of synthetics which pass the privacy test. The results are shown in Figure~\ref{fig:acs_ptc}: even for stringent privacy parameters (e.g., $k=100$) a significant proportion (i.e., over $50$\% for $\omega \in_R [5-11]$) of synthetics pass the test. 

\begin{figure}
	\centering
	\includegraphics[width=0.95\linewidth]{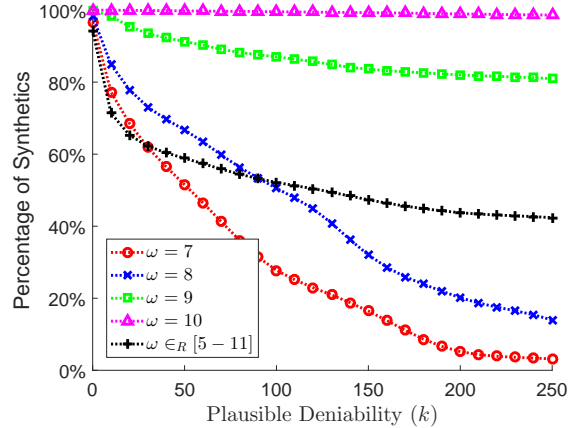}
	\caption{\small Percentage of candidates which pass the privacy test for various values of $k$ and $\omega$ ($\gamma = 2$). The percentage decreases for higher privacy (i.e., larger $k$) but remains significant even for combinations of parameters yielding high privacy. The conclusion is that (very) large synthetic datasets can efficiently be generated.}
	\label{fig:acs_ptc}
	\vspace{-8pt}
\end{figure}
\section{Related Work}\label{sec:relatedwork}
Data synthesis is the process of creating artificial data that can be used to serve some key purposes of real data. For example, it can be used to test software~\cite{whiting2008creating,pedersen2006simple} when there is an inadequate supply of real tests available from operations. It can be used to evaluate the performance of database systems~\cite{gray1994quickly} when obtaining real data at sufficient scale is difficult or expensive.  It can also be used to protect privacy. In 1993, Rubin~\cite{rubin1993statistical} proposed the idea of creating synthetic data based on multiple imputation, that is, on repeated use of a function that proposes values for missing fields in a record. The generative model we presented in Section~\ref{sec:gm:model} uses a similar technique. This and related work have given rise to a substantial body of research on the release of synthetic data~\cite{keller2016does,drechsler2011synthetic}. Such techniques have achieved significant deployment; for example, they have been adopted by the U.S. Census Bureau~\cite{hawala2008producing,kinney2011towards}.

An alternative to data synthesis sometimes called \emph{syntactic} privacy protection transforms the sensitive data using a combination of aggregation, suppression, and generalization, to achieve criteria such as $k$-anonymity~\cite{samarati1998generalizing} or $l$-diversity~\cite{machanavajjhala2007diversity}. Although these techniques support privacy protected data publishing without synthesis, the degree of privacy protection they provide depends on the background knowledge of adversaries. The key difference between $(k,\gamma)$-plausible deniability and $k$-anonymity is that the latter is a syntactic condition on the \emph{output} of a \emph{sanitization process}, whereas plausible deniability is a condition on a \emph{synthetic generator mechanism} with respect to its \emph{input} data.

Plausible deniability as a privacy notion for synthetic data was proposed by Bindschaedler and Shokri in~\cite{bindschaedler2016synthesizing} which describes a technique to synthesize location trajectories in a privacy-preserving way. The use of plausible deniability in~\cite{bindschaedler2016synthesizing} is specific to location privacy as it is defined in terms of a semantic distance between two location trajectories. In contrast, this work generalizes the notion of plausible deniability for \emph{general} data synthesis by establishing it as a privacy criterion in terms of the underlying synthesis probabilities. Consequently, this criterion is applicable to any system and any (kind of) data. The generative framework described in this paper enables us to formally connect plausible deniability to differential privacy (Theorem~\ref{thm:dp}).

Differential privacy provides guarantees even against adversaries with (almost) unlimited background knowledge. However, popular differentially private mechanisms such as the Laplacian mechanism target the release of aggregate statistics. By contrast, we focus on synthesizing data with the \emph{same format} as the (sensitive) input data. Preserving the data format is valuable for many reasons, such as enabling the use of applications and code that are targeted at raw or sanitized data. There is a line of work on mechanisms that are differentially private and provide data as an output. Some of these techniques have theoretical properties that may make them impractical in important cases~\cite{blum2013learning}.

A prominent example is the Exponential Mechanism~\cite{mcsherry2007mechanism}. 
Informally, the mechanism induces a distribution on the space of output records by assigning a weight to each such record and then producing output records by sampling from that distribution. The mechanism is of great importance for algorithm design due to its generality. However, as several researchers have pointed out~\cite{cormode2012differentially,Lantz2015SEM,blocki2016differentially}, a direct application is too costly to be practical for high-dimensional datasets due to the complexity of sampling, which grows exponentially in the dimension of the data records. Concretely, a straightforward implementation of the exponential mechanism to generate synthetic records from the ACS dataset (Section~\ref{sec:data}) would sample from a universe of records of size roughly $2^{39}$ (Table~\ref{tbl:acs13stats}). This would require pre-computing that many probabilities. If we assume we can store each value in four bytes this would require about $2$TB of memory. In contrast, the complexity of synthesizing a record with our framework depends only on the number of records in the dataset and on the complexity of our generative model and thus the process is very efficient in practice (Section~\ref{subsec:eval_perf}).

There is a growing collection of mechanisms and case studies for differentially private release of data~\cite{abowd2008protective,charest2011can,mcclure2012differential,machanavajjhala2008privacy,wasserman2010statistical,chen2011publishing,jagannathan2008privacy}, although some of these are based on a broad view of data release, such as the release of histograms or contingency tables. Our use of plausible deniability to achieve differentially private data adds to this body of work. The typical approach to protect privacy in this context is to add noise directly to the generative model. For example, this is the approach taken by~\cite{li2014differentially,bowen2016differentially,liu2016model,zhang2014privbayes}. In particular,~\cite{zhang2014privbayes} constructs a generative model based on Bayesian networks similar to the generic generative model of Section~\ref{sec:gm}.

Our work takes a novel approach: instead of trying to achieve differential privacy directly, we design the generative framework to achieve plausible deniability. A major step towards achieving plausible deniability and a key novelty is the idea of \emph{testing} privacy. That is, instead of designing the mechanism so it achieves that notion by design, we use a privacy test which rejects ``bad'' samples. As a side effect, the generative model is decoupled from the framework. Privacy is guaranteed in a way that is oblivious to the specifics of the generative model used for synthesis. Furthermore, the guarantee offered by our proposed $(k,\gamma)$-plausibly deniable mechanism is surprisingly close to that of differential privacy, as evidenced by the fact that merely randomizing the threshold yields a differentially private mechanism.

The idea of running data synthesis and then testing privacy has been used before. For example, Reiter et al. in~\cite{reiter2009estimating} and~\cite{reiter2014bayesian} use inference to evaluate privacy risk for a synthetic data release. However, there is no proof of privacy, and it is not efficient to run a set of inference attacks to estimate the risk before releasing a dataset. 
\section{Discussion}\label{sec:discuss}

Regardless of whether one intends to release a set of aggregate statistics or a synthetic dataset, there is no privacy protection technique that can preserve utility for all meaningful utility functions. However, one key feature of our generative framework is that, unlike other approaches based on differential privacy, any generative model $\mathcal{M}$ can be used while keeping the same privacy guarantees. 
As a result, designing $\mathcal{M}$ is a data science problem which need not involve considerations of the privacy of the seeds.

Parameter $\omega$ (Section~\ref{sec:gm}) controls the closeness of synthetics to their seeds. (Lower $\omega$ means more dependence on the seed but it is harder to pass the privacy test.) A pragmatic approach is to generate synthetics for various values of $\omega$ and then randomly sample a subset of those synthetics which pass the privacy test (this is evaluated in Section~\ref{sec:evaluation}). Note that no matter what the value of $\omega$ is, the privacy of the seeds is ensured because of the privacy test.

In the special case where the generative model $\mathcal{M}$ is independent of the seed, the privacy guarantee applies to any output from Mechanism~\ref{def:mechanism} because the privacy test always passes. However, for a seed-dependent generative model, the privacy of the seeds is safeguarded by rejecting synthetics which do not pass the privacy test. So, when generating several synthetics using the same input dataset, the privacy obtained depends on the number of synthetics released. 
In fact, when Privacy Test~\ref{def:randprivacytest} is used, the $(\varepsilon,\delta)$-differential privacy guarantee applies to a single (released) synthetic record $y$. That said, the composition theorems for differential privacy can be used to extend the guarantee to arbitrarily large synthetic datasets, provided the privacy budget is appropriately increased. We leave as future work the design of improved composition strategies.

\section{Conclusions}
We have presented the first practical data synthesis tool with strong privacy guarantees. We have formalized plausible deniability for generating generic data records, and have proven that our mechanisms can achieve differential privacy without significantly sacrificing data utility. 

{
    \medskip \noindent \textbf{Acknowledgments.}
	This work was supported in part by NSF CNS grants 13-30491 and 14-08944. The views expressed are those of the authors only.
}

{	\scriptsize
	\bibliographystyle{abbrv}
	\bibliography{refs}
}

\appendix
\section{Composing Differential Privacy}\label{app:dp}

\begin{theorem}[Sequential Composition -- 3.16~\cite{dwork2014algorithmic}]
	Let $\mathcal{F}_i$ be an $(\varepsilon_i,\delta_i)$-differentially private mechanism, for $i=1,\ldots,m$. Then, for any dataset $D$, the mechanism which releases outputs: $\left(\mathcal{F}_1(D), \mathcal{F}_2(D), \ldots, \mathcal{F}_m(D) \right)$ is an $(\varepsilon,\delta)$-differentially private mechanism for $\varepsilon = \sum_{i=1}^{m} \varepsilon_i$, and $\delta = \sum_{i=1}^{m} \delta_i$.
\label{thm:dp:seqc}
\end{theorem}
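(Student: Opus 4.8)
The plan is to prove the statement by reducing it, via induction on $m$, to the composition of two mechanisms, treating the pure and approximate cases separately. For the base case $m=1$ there is nothing to prove; in the inductive step I would compose the mechanism releasing $(\mathcal{F}_1(D),\ldots,\mathcal{F}_{m-1}(D))$ — which by the inductive hypothesis is $(\sum_{i<m}\varepsilon_i,\sum_{i<m}\delta_i)$-differentially private — with $\mathcal{F}_m$. The fact I would exploit throughout is that the $\mathcal{F}_i$ use independent internal randomness, so the density (or mass function) of the joint output factorizes as a product of the per-mechanism densities.

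For pure differential privacy (all $\delta_i=0$) this factorization immediately closes the argument: for any fixed output tuple the likelihood ratio between inputs $D$ and $D'$ is the product of the per-coordinate likelihood ratios, each bounded by $e^{\varepsilon_i}$, so the joint ratio is at most $e^{\sum_i\varepsilon_i}$ pointwise; integrating over any event $S$ then gives the claim with $\varepsilon=\sum_i\varepsilon_i$.

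The substance is the additive accounting of the $\delta_i$. Here I would route the proof through the max-divergence reformulation of approximate differential privacy together with its structural characterization (see \cite{dwork2014algorithmic}): $\mathcal{F}_i$ being $(\varepsilon_i,\delta_i)$-private with respect to $(D,D')$ is equivalent to the existence of a random variable $Y_i'$ that lies within total-variation distance $\delta_i$ of $\mathcal{F}_i(D)$ and is \emph{purely} $\varepsilon_i$-close to $\mathcal{F}_i(D')$, i.e. $D_\infty(Y_i'\,\|\,\mathcal{F}_i(D'))\le\varepsilon_i$. Applying this coordinatewise and passing to the independent product, two quantities add up cleanly: the total-variation distance between $(\mathcal{F}_i(D))_i$ and $(Y_i')_i$ is at most $\sum_i\delta_i$ (a union bound, or coupling, over independent coordinates), while the pure max-divergence between $(Y_i')_i$ and $(\mathcal{F}_i(D'))_i$ is at most $\sum_i\varepsilon_i$ (the factorizing likelihood ratio of the pure case). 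Re-applying the characterization in reverse then yields $(\sum_i\varepsilon_i,\sum_i\delta_i)$-differential privacy, and since ``neighboring'' is symmetric the same bound holds with the roles of $D$ and $D'$ exchanged.

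The main obstacle is precisely this $\delta$-bookkeeping. The naive hybrid argument — swapping the input of one coordinate at a time and charging a factor $e^{\varepsilon}$ plus an additive $\delta$ at each swap — leaks multiplicative factors and yields only $\delta_1+e^{\varepsilon_1}\delta_2$ for two mechanisms, which is weaker than the claimed sum. Obtaining the tight $\sum_i\delta_i$ is what forces the detour through the ``shift at most $\delta_i$ of mass in total variation, then invoke pure $\varepsilon_i$-closeness'' decomposition, and the one genuinely non-trivial ingredient is the correctness of that decomposition (that approximate differential privacy is exactly pure differential privacy up to a $\delta$-statistical-distance perturbation); everything else reduces to the factorization of likelihood ratios under independence.
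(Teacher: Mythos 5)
The paper does not actually prove this statement: Theorem~\ref{thm:dp:seqc} is imported verbatim from Dwork and Roth (Theorem~3.16 of~\cite{dwork2014algorithmic}) and used as a black box in Section~\ref{sec:gm:dprivacy}, so there is no in-paper proof to compare against. Judged on its own terms, your argument is correct and is essentially the standard proof of non-adaptive basic composition. You correctly identify the one real difficulty --- the naive hybrid argument over coordinates yields $\delta_1 + e^{\varepsilon_1}\delta_2$ rather than $\delta_1+\delta_2$ --- and the detour you take to avoid it is the right one: the characterization of one-sided $(\varepsilon,\delta)$-closeness as ``within total-variation distance $\delta$ of a distribution that is purely $\varepsilon$-close'' (Lemma~3.17 of~\cite{dwork2014algorithmic}, provable by clipping the density of $\mathcal{F}_i(D)$ to $e^{\varepsilon_i}$ times that of $\mathcal{F}_i(D')$ and redistributing the at most $\delta_i$ of excess mass), after which the product structure gives $\Delta\!\left(\prod_i \mathcal{F}_i(D), \prod_i Y_i'\right) \leq \sum_i \delta_i$ and $D_\infty\!\left(\prod_i Y_i' \,\middle\|\, \prod_i \mathcal{F}_i(D')\right) \leq \sum_i \varepsilon_i$, and chaining the two bounds over any event $S$ closes the argument. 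Two minor remarks: the opening induction scaffold is vestigial, since your actual argument applies the decomposition to all $m$ coordinates at once and never uses the inductive hypothesis; and your proof covers exactly the non-adaptive composition stated here (mechanisms run independently on the same $D$), whereas the source theorem in~\cite{dwork2014algorithmic} is proved for adaptive composition --- that stronger form is not needed anywhere in this paper, so the restriction is harmless. Neither point is a gap.
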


\begin{theorem}[Advanced Composition -- 3.20~\cite{dwork2014algorithmic}]
	Let $\mathcal{F}_i$ be an $(\varepsilon,\delta)$-differentially private mechanism, for $i=1,\ldots,m$. Then the mechanism represented by the sequence of $k$ queries over $\mathcal{F}_1(\cdot), \mathcal{F}_2(\cdot), \ldots, \mathcal{F}_m(\cdot)$ \emph{with potentially different inputs} is $(\varepsilon',\delta')$-differentially private for:
	\begin{eqnarray*}
	\varepsilon' = \varepsilon \sqrt{2 k \ln{\frac{1}{\delta''}}} + k \varepsilon ( e^\varepsilon - 1 ) \quad \text{ and }& \delta' = k\delta + \delta'' \ .
	\end{eqnarray*}
\label{thm:dp:advc}
\end{theorem}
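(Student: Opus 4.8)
Since the final statement is Theorem~\ref{thm:dp:advc}, the classical advanced composition theorem, the plan is to reduce $(\varepsilon,\delta)$-differential privacy to a tail bound on the \emph{privacy loss random variable} and then establish that tail bound by martingale concentration. For a mechanism $\mathcal{M}$ and neighboring inputs $D,D'$, define the privacy loss at output $o$ as $L^{(o)} = \ln\frac{\pr{\mathcal{M}(D)=o}}{\pr{\mathcal{M}(D')=o}}$, and let $\mathbf{L}$ denote this quantity when $o$ is drawn from $\mathcal{M}(D)$. The first ingredient is the standard two-way reduction that I would state up front: (a) a tail bound $\pr{\mathbf{L} > \varepsilon'} \le \delta''$ implies $(\varepsilon',\delta'')$-differential privacy, and (b) every $(\varepsilon,\delta)$-DP mechanism satisfies $L^{(o)} \le \varepsilon$ outside an output event of probability at most $\delta$. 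These reductions turn the whole theorem into a concentration question about $\mathbf{L}$.

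I would first treat the pure case $\delta=0$. Writing the composed output of the $k$ adaptive queries as $(o_1,\ldots,o_k)$, the total privacy loss telescopes by the chain rule into $\mathbf{L} = \sum_{i=1}^{k}\mathbf{L}_i$, where $\mathbf{L}_i$ is the loss of the $i$-th query conditioned on the revealed prefix $o_1,\ldots,o_{i-1}$. Two per-step facts are needed: (i) $|\mathbf{L}_i| \le \varepsilon$ pointwise, since once the prefix is fixed each step is an $(\varepsilon,0)$-DP query on fixed neighboring inputs; and (ii) $\mathbb{E}[\mathbf{L}_i \mid o_1,\ldots,o_{i-1}] \le \varepsilon(e^\varepsilon-1)$, which is the bound on the KL divergence (expected privacy loss) of an $\varepsilon$-DP mechanism. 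Summing (ii) over the $k$ steps bounds the conditional mean of $\mathbf{L}$ by $k\varepsilon(e^\varepsilon-1)$, the first term of the target $\varepsilon'$.

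Next I would concentrate. Subtracting conditional means yields a martingale difference sequence whose increments are bounded in magnitude by $\varepsilon$, so the Azuma--Hoeffding inequality gives
\[ \pr{\mathbf{L} > k\varepsilon(e^\varepsilon-1) + \varepsilon\sqrt{2k\ln(1/\delta'')}} \le \delta'' . \]
The bracketed threshold is exactly $\varepsilon'$, so reduction (a) converts this into $(\varepsilon',\delta'')$-differential privacy in the pure case.

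Finally I would lift to $\delta>0$. For each step define the bad output set where the likelihood ratio between the two neighboring inputs exceeds $e^\varepsilon$; by $(\varepsilon,\delta)$-DP each such event has probability at most $\delta$, so a union bound over the $k$ steps bounds by $k\delta$ the event that some step lands in its bad set. Outside this combined event every per-step loss is again at most $\varepsilon$, so the martingale argument applies unchanged and contributes the extra $\delta''$; adding the two failure modes yields $\delta' = k\delta + \delta''$ with the same $\varepsilon'$. I expect the main obstacle to be the rigor of this last step: one must check that conditioning on the good event does not distort the range and conditional-expectation bounds feeding Azuma, and that adaptivity (each query's inputs may depend on earlier outputs) is correctly absorbed by the tower-property decomposition of $\mathbf{L}$. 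The delicate interplay between the $\delta$-slack and the martingale tail, rather than any single computation, is where the care lies.
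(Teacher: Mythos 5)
You should first note that the paper never proves this statement: Theorem~\ref{thm:dp:advc} is quoted as Theorem 3.20 of \cite{dwork2014algorithmic} and used as a black box for the privacy accounting in Section~\ref{sec:gm:dprivacy}. So your proposal can only be judged against the classical Dwork--Rothblum--Vadhan argument, which is indeed the route you sketch. Your pure-DP half ($\delta=0$) is essentially correct: the telescoping of the privacy loss, the pointwise bound $|\mathbf{L}_i|\le\varepsilon$, the conditional-mean (KL) bound $\varepsilon(e^\varepsilon-1)$, and the Azuma step are the standard ingredients. One technical slip there: after centering, the martingale differences are not bounded in magnitude by $\varepsilon$ (the conditional mean can be as large as $\varepsilon(e^\varepsilon-1)$); what saves the stated constant is that each difference lies in an interval of length $2\varepsilon$, which is all the Hoeffding--Azuma lemma actually requires.

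The genuine gap is the lifting to $\delta>0$, and it is not merely a matter of rigor as you suggest --- the claim it rests on is false. Your reduction (b), repeated in the final paragraph, asserts that $(\varepsilon,\delta)$-DP forces the bad set $B=\{o:\ln(p(o)/q(o))>\varepsilon\}$ to satisfy $p(B)\le\delta$. Applying the DP inequality to $B$ and using $p>e^{\varepsilon}q$ on $B$ yields only $\int_B\left(p-e^{\varepsilon}q\right)\le\delta$: the parameter $\delta$ bounds the \emph{excess} (hockey-stick) mass over $B$, not its probability. For a counterexample, let $q$ be uniform on $N$ outcomes and let $p$ equal $(1+c)/N$ on half of them and $(1-c)/N$ on the rest, with $1+c$ slightly above $e^{\varepsilon}$; this pair is $(\varepsilon,\delta)$-indistinguishable in both directions with $\delta=O(\varepsilon^2)$, yet $p(B)>1/2$. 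Consequently the union bound over the $k$ bad sets does not produce $k\delta$, and conditioning on the complement of those events distorts exactly the range and conditional-expectation bounds that feed Azuma. The standard repair is the coupling characterization of Dwork--Rothblum--Vadhan (presented alongside the composition analysis in \cite{dwork2014algorithmic}): if a mechanism is $(\varepsilon,\delta)$-DP, then for each pair of neighboring inputs there exist surrogate distributions within statistical distance $\delta$ of the true ones whose likelihood ratio is bounded by $e^{\varepsilon}$ pointwise; one runs your martingale argument on these pure-DP surrogates, constructed adaptively query by query, and pays the $k\delta$ term once at the end via the statistical-distance accounting. With that lemma substituted for your bad-set claim, your outline becomes the standard proof.
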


\begin{theorem}[Amplification of Sampling --~\cite{li2012sampling}]\ \\
	Let mechanism $\mathcal{F}$ be an $(\varepsilon,\delta)$-differentially private mechanism. The mechanism which first sub-samples each record of its input dataset with probability $\delta < p < 1$ and then runs $\mathcal{F}$ on the sub-sampled dataset is $(\varepsilon',\delta')$-differentially private for:
	\begin{eqnarray*}
	\varepsilon' = \ln{\left(1 + p (e^\varepsilon - 1)\right)} \quad \text{ and }& \delta' = p\delta \ .
	\end{eqnarray*}
\label{thm:dp:amplif}
\end{theorem}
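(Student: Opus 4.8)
The plan is to establish both directions of the differential-privacy inequality directly from the definition, exploiting a coupling between the subsampling applied to neighboring inputs. Write $\mathcal{F}_p$ for the composed mechanism (subsample, then apply $\mathcal{F}$), and let $q$ be the distribution over subsets $T$ induced by subsampling $D$. Since neighboring datasets differ by one record, I may assume without loss of generality that $D' = D \cup \{x^*\}$. The key observation is that subsampling $D'$ can be generated by first subsampling $D$ to get $T \sim q$ and then independently adding $x^*$ with probability $p$; hence it yields $T$ with probability $(1-p)q(T)$ and $T \cup \{x^*\}$ with probability $p\,q(T)$. Writing $a_T = \pr{\mathcal{F}(T) \in S}$ and $b_T = \pr{\mathcal{F}(T \cup \{x^*\}) \in S}$ for an arbitrary output set $S$, this gives
\begin{align*}
\pr{\mathcal{F}_p(D) \in S} &= \sum_T q(T)\, a_T, \\
\pr{\mathcal{F}_p(D') \in S} &= \sum_T q(T)\, \big[(1-p)\, a_T + p\, b_T\big].
\end{align*}
Because $T$ and $T \cup \{x^*\}$ are themselves neighboring, the $(\varepsilon,\delta)$-guarantee of $\mathcal{F}$ relates $a_T$ and $b_T$ in both directions.

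For the forward direction I would substitute $b_T \le e^{\varepsilon} a_T + \delta$ into the expression for $\pr{\mathcal{F}_p(D') \in S}$ and collect terms, using $\sum_T q(T) = 1$. This yields
\begin{align*}
\pr{\mathcal{F}_p(D') \in S} \le \big(1 + p(e^{\varepsilon} - 1)\big)\, \pr{\mathcal{F}_p(D) \in S} + p\delta,
\end{align*}
which is exactly the claim, since $(1-p) + p e^{\varepsilon} = 1 + p(e^{\varepsilon} - 1) = e^{\varepsilon'}$ and $\delta' = p\delta$.

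The harder direction is bounding $\pr{\mathcal{F}_p(D) \in S}$ by $e^{\varepsilon'}\pr{\mathcal{F}_p(D') \in S} + \delta'$, and this is the main obstacle: subsampling is \emph{not} symmetric between the two neighbors, since only $D'$ carries the extra record, so this inequality does not follow from the forward one. Here I would apply the reverse guarantee $b_T \ge e^{-\varepsilon}(a_T - \delta)$ (which also holds trivially when $a_T < \delta$ because $b_T \ge 0$) to obtain the lower bound
\begin{align*}
\pr{\mathcal{F}_p(D') \in S} \ge \big[(1-p) + p e^{-\varepsilon}\big]\, \pr{\mathcal{F}_p(D) \in S} - p e^{-\varepsilon}\delta,
\end{align*}
and then rearrange to isolate $\pr{\mathcal{F}_p(D) \in S}$, which introduces the reciprocal factor $1/\big[(1-p) + p e^{-\varepsilon}\big]$.

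Controlling that reciprocal is the crux. The clean route is the algebraic identity
\begin{align*}
\big[(1-p)+p e^{-\varepsilon}\big]\big[(1-p)+p e^{\varepsilon}\big] = (1-p)^2 + p^2 + 2p(1-p)\cosh\varepsilon \ge (1-p)^2 + p^2 + 2p(1-p) = 1,
\end{align*}
using $\cosh\varepsilon \ge 1$; this shows the reciprocal factor is at most $(1-p)+p e^{\varepsilon} = e^{\varepsilon'}$, giving the correct multiplicative constant. For the additive term, I would note that $p < 1$ forces $\varepsilon' = \ln\!\big(1 + p(e^{\varepsilon}-1)\big) < \varepsilon$, so the leftover error $e^{\varepsilon'} p e^{-\varepsilon}\delta = p\, e^{\varepsilon' - \varepsilon}\delta$ stays below $p\delta = \delta'$. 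Combining the two directions completes the proof.
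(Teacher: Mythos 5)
The paper never proves this theorem: it is imported by citation from Li et al.~\cite{li2012sampling} in Appendix~A, so there is no internal proof to compare against, and your attempt must stand on its own. It does — I checked the three places where it could fail. First, your coupling is exactly right for the Bernoulli subsampling in the statement: a subsample of $D' = D \cup \{x^*\}$ is a subsample $T$ of $D$ together with an independent inclusion of $x^*$ with probability $p$, which justifies $\pr{\mathcal{F}_p(D') \in S} = \sum_T q(T)\left[(1-p)a_T + p\,b_T\right]$; since $T$ and $T \cup \{x^*\}$ are neighbors, invoking the guarantee of $\mathcal{F}$ on them is legitimate. Second, the forward substitution $b_T \leq e^{\varepsilon}a_T + \delta$ gives the factor $(1-p) + p e^{\varepsilon} = 1 + p(e^{\varepsilon}-1) = e^{\varepsilon'}$ and additive term $p\delta$ exactly as claimed. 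Third, the reverse direction — which you correctly identify as not following by symmetry, because only $D'$ carries $x^*$ — is the step most presentations gloss over, and your algebra is sound: $\left[(1-p)+pe^{-\varepsilon}\right]\left[(1-p)+pe^{\varepsilon}\right] = (1-p)^2 + p^2 + 2p(1-p)\cosh{\varepsilon} \geq 1$ bounds the reciprocal factor by $e^{\varepsilon'}$, and $\varepsilon' \leq \varepsilon$ (a consequence of $p \leq 1$) keeps the leftover additive term $p\,e^{\varepsilon'-\varepsilon}\delta$ at most $p\delta$. Two minor points: when $\varepsilon = 0$ you get $\varepsilon' = \varepsilon$ rather than the strict inequality you assert, but only the non-strict version is used; and your argument never needs the hypothesis $\delta < p$ appearing in the statement, which is an artifact of the cited source. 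What your route buys is a complete, elementary, self-contained derivation of the amplification constant $\ln\left(1 + p(e^{\varepsilon}-1)\right)$; what the citation buys the paper is brevity, at the cost of leaving this composition step opaque to the reader.
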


\section{Proof: Sensitivity of Entropy}\label{app:proofs}
\begin{lemma}[Sensitivity of $\mathrm{H}$]\ \\
Let $\rv{z}$ be a discrete random variable with a probability distribution estimated from $n$ data records. The sensitivity of $\mathrm{H}(\rv{z})$ is:
	\begin{align*}
		\Delta_{H} \leq \frac{1}{n} [2 + \frac{1}{\ln(2)} + 2 \log_2{n}]  \ .
	\end{align*}
	\label{lem:hsensitivity}
\end{lemma}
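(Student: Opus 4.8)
The plan is to express the entropy as a function of the value counts and then measure how much it can move when a single record is inserted into (or deleted from) the dataset; by symmetry it suffices to analyze insertion. Write $c_1, \dots, c_K$ for the counts of the $n$ records falling in each value class, so that $p_i = c_i/n$ and, after a one-line rearrangement,
\[ \mathrm{H}(\rv{z}) = \log_2 n - \frac{1}{n} S, \qquad S := \sum_{i} c_i \log_2 c_i, \]
with the convention $0 \log_2 0 = 0$. This form is convenient because inserting one record changes only a single count $c_j \mapsto c_j + 1$ and the total $n \mapsto n+1$, so the perturbation splits cleanly into a normalization part $\log_2 n$ and a count-sum part $S$.

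First I would subtract the two expressions and group terms as
\[ \mathrm{H}' - \mathrm{H} = \bigl(\log_2(n+1) - \log_2 n\bigr) + \frac{S}{n(n+1)} - \frac{S' - S}{n+1}, \]
where $S' = S + f(c_j+1) - f(c_j)$ for $f(c) := c \log_2 c$. The three pieces are then bounded separately. For the first, $\log_2(1 + 1/n) \le \frac{1}{n \ln 2}$ from $\ln(1+x) \le x$. For the numerator of the second, every $c_i \le n$ gives $\log_2 c_i \le \log_2 n$, hence $S \le n \log_2 n$ and $\frac{S}{n(n+1)} \le \frac{\log_2 n}{n}$. For the third, applying the mean value theorem to $f$, whose derivative is $f'(c) = \log_2 c + \frac{1}{\ln 2}$, yields $f(c_j+1) - f(c_j) = \log_2 \xi + \frac{1}{\ln 2}$ for some $\xi \in (c_j, c_j+1)$, so that $S' - S \le \log_2(n+1) + \frac{1}{\ln 2}$. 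Assembling these estimates by the triangle inequality and simplifying with $\log_2(n+1) \le 1 + \log_2 n$ produces a bound of the form $\frac{1}{n}\bigl[\,c_0 + \tfrac{1}{\ln 2} + 2 \log_2 n\,\bigr]$ for an absolute constant $c_0$, matching the claimed $\frac{1}{n}[2 + \frac{1}{\ln 2} + 2 \log_2 n]$.

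The genuinely delicate part is not any single inequality but the bookkeeping of the constants: the factor $2$ on $\log_2 n$ arises because one logarithmic term comes from bounding the whole sum $S$ while a second, independent one comes from the single increment $f(c_j+1) - f(c_j)$, and getting the additive constant and the $\frac{1}{\ln 2}$ coefficient to land exactly requires care about whether each fraction carries $\frac{1}{n}$ or $\frac{1}{n+1}$. I would also check the boundary cases: the convention $0\log_2 0 = 0$ makes $f$ continuous at the origin, so the estimate remains valid when the inserted value was previously absent (and, in the symmetric deletion case, when a count drops to zero), and I would remark that deletion gives the same bound with the roles of $n$ and $n+1$ interchanged.
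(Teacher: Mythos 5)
Your decomposition $\mathrm{H}(\rv{z}) = \log_2 n - S/n$ with $S=\sum_i c_i\log_2 c_i$ and each of the three individual estimates are sound, but the proof has two genuine problems. First, you analyze a different neighboring relation than the lemma requires: you insert a record, so your two histograms are normalized by $n$ and $n+1$ respectively. The paper's proof keeps $|D|=|D'|=n$ (consistent with the lemma's phrasing that the distribution is ``estimated from $n$ data records'', which applies to both datasets) and replaces one record, so the histogram changes in exactly two positions, $c_{j_1}\mapsto c_{j_1}+1$ and $c_{j_2}\mapsto c_{j_2}-1$, while the $\log_2 n$ normalization cancels entirely. The whole perturbation is then $\frac{1}{n}\lvert f(c_{j_1})-f(c_{j_1}+1)+f(c_{j_2})-f(c_{j_2}-1)\rvert$ with $f(c)=c\log_2 c$; the paper bounds the decrement term by $\frac{1}{n}(2+\log_2 n)$ and the increment term by $\frac{1}{n}(\frac{1}{\ln 2}+\log_2 n)$, which is exactly where the constant $2+\frac{1}{\ln 2}+2\log_2 n$ comes from. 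Your quantity is a different sensitivity, and nothing in your argument transfers the bound from one convention to the other.

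Second, even granting the insertion model, the constants do not land where you assert. Summing your three bounds by the triangle inequality gives $\frac{1}{n\ln 2}+\frac{\log_2 n}{n}+\frac{\log_2(n+1)+1/\ln 2}{n+1}\le \frac{1}{n}\left[1+\frac{2}{\ln 2}+2\log_2 n\right]$, and $1+\frac{2}{\ln 2}\approx 3.89$ strictly exceeds the claimed $2+\frac{1}{\ln 2}\approx 3.44$, so the stated inequality is not established; the sentence ``matching the claimed bound'' papers over a failure of the arithmetic. The argument is salvageable within your own setup: in $\mathrm{H}'-\mathrm{H}=\log_2(1+1/n)+\frac{S}{n(n+1)}-\frac{S'-S}{n+1}$ the first two terms are nonnegative and the third is nonpositive, so $\lvert \mathrm{H}'-\mathrm{H}\rvert$ is at most the maximum of the two one-sided bounds rather than their sum, namely $\frac{1}{n}\left[1+\frac{1}{\ln 2}+\log_2 n\right]$, which does imply the lemma. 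But that sign observation is the missing idea, and without it (or a switch to the paper's replacement model) the proof as written does not prove the stated bound.
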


Lemma~\ref{lem:hsensitivity} is used in Section~\ref{sec:gm:structure_dp}~\eqref{eq:gm:entropy_sensitivity}.

\begin{proof}
	Let $\rv{z}$ and $\rv{z'}$ be the random variables associated with two histograms (of dimension $m$) computed from two neighboring datasets $D$ and $D'$, respectively. Both datasets have $n$ records but differ in exactly one record.
	
	Let $z_c = (c_1, c_2, \ldots, c_m)$ and $z'_c = (c'_1, c'_2, \ldots, c'_m)$ represent the histograms over the $m$ values of the attribute for $\rv{z}$ and $\rv{z'}$, respectively. The entropy is computed over the probability distribution represented by a histogram.
	
	Remark that the histograms of the considered neighboring datasets $D$ and $D'$ can only differ in two positions. If they do not differ in any position, then the Lemma trivially holds ($\Delta_{H} = 0$). That is, without loss of generality, there exists $j_1$ and $j_2$ (with $j_1 \neq j_2$) such that $c'_{j_1} = c_{j_1} + 1$ and $c'_{j_2} = c_{j_2} - 1$. Also, $n-1 \geq c_{j_1} \geq 0$ which means that $n \geq c'_{j_1} \geq 1$, and $n \geq c_{j_2} \geq 1$ which means that $n-1 \geq c'_{j_2} \geq 0$. Furthermore for $i \neq j_1, j_2$, we have $c'_{i} = c_i$, and also: \[ \sum_{i=1}^{m} c_i = \sum_{i=1}^{m} c'_i = n \ . \]

	Now:
	\begin{align*}
		\mathrm{H}(\rv{z}) &= -\sum_{i=1}^{m} \frac{c_i}{n} \log_2{\frac{c_i}{n}} 						\\
					  &= -\frac{1}{n} \left[ \sum_{i=1}^{m} c_i \log_2{c_i} - n \log_2{n} \right]	\\
					  &= \log_2{n} - \frac{1}{n} \left( c_{j_1} \log_2{c_{j_1}} + c_{j_2} \log_2{c_{j_2}} + \sum_{\mathclap{i \neq j_1,j_2}} c_i \log_2{c_i} \right) \ .
	\end{align*}
	Similarly,
	\begin{align*}
		\mathrm{H}(\rv{z'}) &= \log_2{n} - \frac{1}{n}  \sum_{\mathclap{i \neq j_1,j_2}} c_i \log_2{c_i} \\	
		&- \frac{1}{n} \left[ (c_{j_1}+1) \log_2{(c_{j_1}+1)} + (c_{j_2}-1) \log_2{(c_{j_2}-1)} \right] \ .
		\end{align*}
		
	We have that $\Delta_{H} = \max_{c_{j_1},c_{j_2}} \left\lvert \mathrm{H}(\rv{z}) - \mathrm{H}(\rv{z'}) \right\rvert$, but for brevity we omit the max and analyze this quantity with respect to the values of $c_{j_1}$ and $c_{j_2}$ to show that the lemma holds in each case.
	
	Observe that: 
	\begin{align*}
		\Delta_{H} &= \left\lvert \mathrm{H}(\rv{z}) - \mathrm{H}(\rv{z'}) \right\rvert \\
				   &= \frac{1}{n} \left\lvert c_{j_1} \log_2{c_{j_1}} - (c_{j_1}+1) \log_2{(c_{j_1}+1)} \right. \\
				   &+ \left. c_{j_2} \log_2{c_{j_2}} - (c_{j_2}-1) \log_2{(c_{j_2}-1)} \right\rvert \ .
	\end{align*}
	\begin{itemize}
	\item{Case 1: $c_{j_1} = 0$. We have \[ \Delta_{H} = \frac{1}{n} | c_{j_2} \log_2{c_{j_2}} - (c_{j_2}-1) \log_2{(c_{j_2}-1)}| \ . \]
	Clearly, if $c_{j_2} = 1$ then $\Delta_{H} = 0$ and the Lemma trivially holds. So assume $c_{j_2} > 1$.
	We have: 
	\begin{align*}
			\Delta_{H} &= \frac{1}{n} | c_{j_2} \log_2{c_{j_2}} - (c_{j_2}-1) \log_2{(c_{j_2}-1)} | \\
			&= \frac{1}{n} \left\lvert c_{j_2} \log_2{\left(\frac{c_{j_2}}{c_{j_2} - 1}\right)} + \log_2{(c_{j_2} - 1)} \right\rvert \\
			 &\leq \frac{1}{n} \left\lvert c_{j_2} \log_2{\left(\frac{c_{j_2}}{c_{j_2} - 1}\right)} \right\rvert + \frac{1}{n} \log_2{(c_{j_2} - 1)} \\
			 &\leq \frac{1}{n} \log_2{(n-1)} + \frac{1}{n} \left\lvert (a+1) \log_2{\left(1+\frac{1}{a}\right)} \right\rvert \ ,
	\end{align*}
	where $a = c_{j_2}-1 \geq 1$. It is easy to see that $(a+1) \log_2(1+\frac{1}{a}) \leq 2$. We conclude that: $\Delta_{H} \leq \frac{1}{n} (2 + \log_2{n})$.
	}
	\item{Case 2: $c_{j_2} = 1$. We have \[ \Delta_{H} = \frac{1}{n} | c_{j_1} \log_2{c_{j_1}} - (c_{j_1}+1) \log_2{(c_{j_1}+1)}| \ . \]
	Again if $c_{j_1} = 0$, then the Lemma trivially holds. So assume $c_{j_1} > 0$.
	We have:
	\begin{align*}
			\Delta_{H} &= \frac{1}{n} | c_{j_1} \log_2{c_{j_1}} - (c_{j_1}+1) \log_2{(c_{j_1}+1)}| \\
					   &= \frac{1}{n} \left\lvert  c_{j_1} \log_2{(\frac{c_{j_1}}{c_{j_1}+1})} - \log_2{(c_{j_1}+1)} \right\rvert \\
					   &\leq \frac{\log_2{n}}{n} +  \frac{1}{n} \left\lvert  c_{j_1} \log_2{(\frac{c_{j_1}}{c_{j_1}+1})} \right\rvert \\
					   &= \frac{\log_2{n}}{n} + \frac{1}{n} c_{j_1} \log_2{(\frac{c_{j_1}+1}{c_{j_1}})} \\
					   &= \frac{\log_2{n}}{n} + \frac{1}{n} c_{j_1} \log_2{\big(1 + \frac{1}{c_{j_1}}\big)} \ .
	\end{align*}
	Using L'Hopital's rule we have $ c_{j_1} \log_2{\big(1 + \frac{1}{c_{j_1}}\big)} \leq \frac{1}{\ln{2}}$. We conclude that: $\Delta_{H} \leq \frac{1}{n} (\frac{1}{\ln{2}} + \log_2{n})$.
	}
	\item{Case 3: $c_{j_1} \geq 1$, $c_{j_2} \geq 2$.
		We have: 
	\begin{align*}
			\Delta_{H} &= \frac{1}{n} | c_{j_1} \log_2{c_{j_1}} - (c_{j_1}+1) \log_2{(c_{j_1}+1)} \\
					   &+ c_{j_2} \log_2{c_{j_2}} - (c_{j_2}-1) \log_2{(c_{j_2}-1)} | \\
					   &\leq \frac{1}{n} | c_{j_1} \log_2{c_{j_1}} - (c_{j_1}+1) \log_2{(c_{j_1}+1)}| \\
					   &+ \frac{1}{n} | c_{j_2} \log_2{c_{j_2}} - (c_{j_2}-1) \log_2{(c_{j_2}-1)} | \ ,
	\end{align*}
	where it is seen that the two terms have been bounded for cases 1 and 2. Thus, putting it all together, we conclude that $\Delta_{H} \leq \frac{1}{n} \left( 2 + \frac{1}{\ln{2}} + 2 \log_2{n} \right)$.
	}
	\end{itemize}
\end{proof}
%%%

\section{Proof: Connection With Differential Privacy}\label{app:connection}
In this section, we prove Theorem~\ref{thm:dp} of Section~\ref{sec:pd}.

\begin{theoremrec}[Differential Privacy of $\mathcal{F}$]\ \\
Let $\mathcal{F}$ denote Mechanism~\ref{def:mechanism} with the (randomized) Privacy Test~\ref{def:randprivacytest} and parameters $k \geq 1$, $\gamma > 1$, and $\varepsilon_0 > 0$. For any neighboring datasets $D$ and $D'$ such that $|D|, |D'| \geq k$, any set of outcomes $Y \subseteq \mathcal{U}$, and any integer $1 \leq t < k$, we have:
	\begin{align*}
		\pr{\mathcal{F}(D') \in Y} \leq e^{\varepsilon} \pr{\mathcal{F}(D) \in Y} + \delta \ ,
	\end{align*}
	for $\delta = e^{-\varepsilon_0 (k-t)}$ and $\varepsilon = \varepsilon_0 + \ln{(1+\frac{\gamma}{t})}$.
\end{theoremrec}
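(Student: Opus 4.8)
The plan is to reduce the set-based statement to a per-output analysis and then account for the $\delta$ slack by summing carefully. Fix an output record $y\in\mathcal{U}$. Since the mechanism picks a seed uniformly and the randomized test depends only on how many records lie in the same geometric ``partition'' as the seed, I would write $\pr{\mathcal{F}(D)=y}=\frac{1}{|D|}\sum_i g(c_i)\,S_i$, where partition $i$ collects the records $d$ with $\gamma^{-i-1}<\pr{y=\mathcal{M}(d)}\le\gamma^{-i}$, $c_i$ is its cardinality, $S_i$ the sum of generating probabilities over it, and $g(c)=\pr{\mathrm{Lap}(1/\varepsilon_0)\le c-k}$ is the probability the test passes when the seed's partition has $c$ records. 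Two elementary facts about $g$ drive everything: (i) $g(c+1)\le e^{\varepsilon_0}g(c)$ for every integer $c$ (this is exactly the statement that thresholding the count, whose sensitivity is $1$, with $\mathrm{Lap}(1/\varepsilon_0)$ noise is $\varepsilon_0$-differentially private); and (ii) $g(c)=\tfrac12 e^{-\varepsilon_0(k-c)}$ whenever $c\le k$, so $g(c)\le\tfrac12 e^{-\varepsilon_0(k-t)}$ for all $c\le t$.

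I would then reduce to $D'=D\cup\{d'\}$. The reverse direction, where $D'$ deletes a record from $D$, is the easy one: passing to the smaller dataset scales each seed's sampling weight by at most $\tfrac{|D|}{|D|-1}\le 1+\tfrac1k$, while the per-output numerator is dominated term-by-term by the denominator (both $g$ and the partition mass only grow when a record is added), and $1+\tfrac1k\le 1+\tfrac{\gamma}{t}\le e^{\varepsilon}$; this gives a clean pointwise $e^\varepsilon$ bound with no $\delta$. So the binding direction is adding a record. For a fixed $y$, only the partition $i^\ast$ into which $d'$ falls is modified: $c_{i^\ast}$ rises by one and $S_{i^\ast}$ by $p'=\pr{y=\mathcal{M}(d')}$, while every other partition (hence $A=\sum_{i\ne i^\ast}g(c_i)S_i$) is untouched. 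I split on the size $m^\ast=c_{i^\ast}(D)$ relative to $t$.

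In the ``large'' case $m^\ast\ge t$, I bound the relative mass increase using the geometric structure: every record in partition $i^\ast$ contributes more than $\gamma^{-i^\ast-1}$, so $S_{i^\ast}>m^\ast\gamma^{-i^\ast-1}$, while $p'\le\gamma^{-i^\ast}$, giving $p'/S_{i^\ast}<\gamma/m^\ast\le\gamma/t$ and hence $S_{i^\ast}+p'\le(1+\gamma/t)S_{i^\ast}$. Combined with $g(m^\ast+1)\le e^{\varepsilon_0}g(m^\ast)$ and $\tfrac{1}{|D|+1}\le\tfrac1{|D|}$, this yields $\pr{\mathcal{F}(D')=y}\le e^{\varepsilon_0}(1+\gamma/t)\,\pr{\mathcal{F}(D)=y}=e^{\varepsilon}\pr{\mathcal{F}(D)=y}$ pointwise. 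In the ``small'' case $m^\ast<t$, the mass may jump by an unbounded factor, so I instead peel off the extra contribution of $d'$: writing $\pr{\mathcal{F}(D')=y}=\tfrac{1}{|D|+1}\big[A+g(m^\ast+1)S_{i^\ast}\big]+\tfrac{1}{|D|+1}g(m^\ast+1)p'$, the first bracket is $\le e^{\varepsilon}\pr{\mathcal{F}(D)=y}$ exactly as above (using only fact (i)), while fact (ii) gives $g(m^\ast+1)\le\tfrac12 e^{-\varepsilon_0(k-t)}$ since $m^\ast+1\le t$.

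Finally I reassemble by summing over $y\in Y$. Every $y$ contributes at most $e^\varepsilon\pr{\mathcal{F}(D)=y}$ plus, only in the small case, the extra term $\tfrac{1}{|D|+1}g(m^\ast+1)p'(y)\le\tfrac12 e^{-\varepsilon_0(k-t)}\,\pr{y=\mathcal{M}(d')}$. Summing these and using $\sum_{y}\pr{y=\mathcal{M}(d')}\le 1$ bounds their total by $\tfrac12 e^{-\varepsilon_0(k-t)}\le\delta$, which yields $\pr{\mathcal{F}(D')\in Y}\le e^{\varepsilon}\pr{\mathcal{F}(D)\in Y}+\delta$. The main obstacle is the small-partition case: one must recognize that the multiplicative guarantee genuinely breaks there, resist trying to rescue it, and instead exploit that the absolute pass probability is exponentially small in $k-t$ so the stray mass telescopes against $\sum_y\pr{y=\mathcal{M}(d')}\le1$ into the additive $\delta$. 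Secondary care is needed in keeping the two neighbor directions straight and in tracking the seed-weight factors $\tfrac1{|D|}$ versus $\tfrac1{|D|\pm1}$.
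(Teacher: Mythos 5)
Your proposal is correct and follows essentially the same route as the paper's proof: the same partition-by-generation-probability decomposition of $\pr{\mathcal{F}(D)=y}$, the same Laplace ratio and tail facts, the same case split on whether the added record's partition has fewer than $t$ records, the same $\gamma/t$ bound from the geometric bucketing, and the same final telescoping of the additive slack against $\sum_y \pr{y=\mathcal{M}(d')}\leq 1$. The only difference is minor bookkeeping in the small-partition case: you peel off just the new record's contribution $g(m^\ast+1)\,p_{d'}(y)$ as the additive term (so the final sum over $y$ needs only the single record $d'$), whereas the paper assigns the entire modified partition's mass $q(D',j,y)$ to $\delta(D',d',y)$ and closes the argument with a double sum over all $s\in D'$ — both yield the stated $\delta=e^{-\varepsilon_0(k-t)}$.
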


We start with some notation. Let $\mathcal{U}$ denote the universe of data records. All data records, i.e., those from the datasets $D$ and $D'$, including synthetic records produced by $\mathcal{M}$ (and $\mathcal{F}$) are elements of $\mathcal{U}$. Let $D$ and $D'$ denote two neighboring datasets, i.e., either $D = D' \cup \{ d \}$ for some $d \in \mathcal{U}$, or $D' = D \cup \{ d' \}$ for some $d' \in \mathcal{U}$. We assume that both $D$ and $D'$ have at least $k$ records, and we have parameters $k \geq 1$, $\gamma > 1$, $\varepsilon_0 > 0$. For convenience we write $p_d(y) = \prsmall{y = \mathcal{M}(d)}$, and refer to $\mathcal{M}$ only implicitly.

Given a dataset $D^\star$, we want to reason about the probability that synthetic record $y$ is released: $\pr{\mathcal{F}(D^\star) = y}$. Observe that given synthetic record $y \in \mathcal{U}$, the records of $D^\star$ can be partitioned (into disjoint sets) by the privacy criterion. Concretely, let $I_d(y)$ be the partition number of a record $d \in D^\star$ with respect to $y$. The \emph{partition number} $I_d(y)$ is the unique non-negative integer such that $\gamma^{-(I_d(y)+1)} < p_d(y) \leq \gamma^{-I_d(y)}$. In other words, $I_d(y) = \lfloor - \log_{\gamma} p_d(y) \rfloor$. If $p_d(y) = 0$ then the partition number is undefined. Similarly, we define the \emph{partition} (or partition set) for $i \geq 0$  as $C_i(D^\star,y) = \{ d : d \in D^\star, I_d(y) = i \}$. That is, partition $i$ is the set of records with partition number $i$. 

A key step is to express the probability $\pr{\mathcal{F}(D^\star) = y}$ in terms of: (1) the probability of generating $y$ from a specific partition (i.e., the seed is in that partition) and (2) the probability of passing the test. For (2) remark that the probability of passing the privacy test depends only on the partition of the seed (see Privacy Test~\ref{def:randprivacytest}).
\begin{definition}
	For any dataset $D^{\star}$, if the seed is in partition $i$, the probability of passing the privacy test is given by: ${\rm{pt}}(D^\star, i, y) = \pr{L \geq k - |C_i(D^\star,y)|}$, where $L \sim \rm{Lap}(\frac{1}{\varepsilon_0})$.
\end{definition}
\begin{definition}
	For any dataset $D^{\star}$, the probability of producing $y$ from partition $i$ is: 
		\[ q(D^\star, i, y) = {\rm{pt}}(D^\star, i, y) \sum_{s \in C_i(D^\star,y)} p_s(y) \ . \]
\end{definition}
As the following demonstrates, $\pr{\mathcal{F}(D^\star) = y}$ is readily expressed in terms of (1) and (2).
\begin{lemma}
	\label{lem:fdstarexpr}
	\label{lem:decomp}	
	For any dataset $D^{\star}$ and any synthetic record $y \in \mathcal{U}$ we have: 
	\begin{align}  
	\pr{\mathcal{F}(D^{\star}) = y} = \frac{1}{|D^{\star}|} \sum_{i \geq 0} q(D^\star, i, y)  \ .
	\end{align}
	
\end{lemma}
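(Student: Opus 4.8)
The plan is to compute $\pr{\mathcal{F}(D^\star) = y}$ directly from the definition of Mechanism~\ref{def:mechanism} by applying the law of total probability over the choice of seed, and then to regroup the resulting sum according to the partition of $D^\star$ induced by $y$. The mechanism proceeds in three stages with mutually independent randomness: it selects a seed $s \in D^\star$ uniformly at random (each with probability $1/|D^\star|$), generates a candidate $\mathcal{M}(s)$, and runs the randomized Privacy Test~\ref{def:randprivacytest}. The event $\{\mathcal{F}(D^\star) = y\}$ occurs exactly when the chosen seed generates $y$ \emph{and} the test passes, so I would condition on which record is selected as the seed.

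First I would write $\pr{\mathcal{F}(D^\star) = y} = \sum_{s \in D^\star} \frac{1}{|D^\star|}\, p_s(y)\, \pr{\text{test passes} \mid \text{seed } s}$, where the factor $p_s(y) = \prsmall{y = \mathcal{M}(s)}$ is the probability that $s$ generates $y$ and the conditioning fixes the dataset and seed passed to the test. Records $s$ with $p_s(y) = 0$ contribute nothing to this sum, consistent with their partition number being undefined; so without loss I may restrict attention to records lying in some partition $C_i(D^\star,y)$.

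The key observation is that, conditioned on the generated record being $y$ and on the seed $s$ lying in partition $i = I_s(y)$, the count $k'$ computed in the test equals $|C_i(D^\star,y)|$, since the plausible seeds of $y$ are precisely the records sharing the partition of $s$. Hence the test passes iff $L \leq k' - k$ for $L \sim \mathrm{Lap}(\tfrac{1}{\varepsilon_0})$, and by the symmetry of the Laplace distribution this equals $\pr{L \geq k - |C_i(D^\star,y)|} = {\rm{pt}}(D^\star, i, y)$, which depends on $s$ only through its partition $i$. I would then group the sum over seeds by partition to obtain $\pr{\mathcal{F}(D^\star)=y} = \frac{1}{|D^\star|}\sum_{i\ge0}{\rm{pt}}(D^\star,i,y)\sum_{s\in C_i(D^\star,y)}p_s(y) = \frac{1}{|D^\star|}\sum_{i\ge0}q(D^\star,i,y)$, which is the claim.

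The computation itself is routine; the three things to get right are the independence of the randomness sources (so that the probabilities factor as a product), the Laplace-symmetry identity used to reconcile the event $\{L \le k'-k\}$ with the definition of ${\rm{pt}}$ as $\pr{L \ge k - |C_i(D^\star,y)|}$, and the bookkeeping that zero-probability records drop out so the regrouping by partition is exhaustive. None of these is a genuine obstacle, so I expect the proof to be short, with the only real content being the recognition that the pass probability is a function of the seed's partition alone.
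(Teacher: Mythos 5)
Your proposal is correct and follows essentially the same route as the paper: condition on the uniformly chosen seed, drop the zero-probability records, observe that the pass probability depends on the seed only through its partition, and regroup the sum by partition into $\sum_{i\ge 0} q(D^\star,i,y)$. The only difference is that you explicitly spell out the Laplace-symmetry step reconciling the event $\{k' \ge k + L\}$ with the definition ${\rm{pt}}(D^\star,i,y) = \pr{L \ge k - |C_i(D^\star,y)|}$, which the paper leaves implicit.
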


In other words, the probability of releasing $y$ (from $D^\star$) can be expressed as the sum, over all partitions, of the probability of generating $y$ from a given partition and then releasing it.
\begin{proof}[of Lemma~\ref{lem:decomp}]
	Fixing a $y$ and following the description of Mechanism~\ref{def:mechanism}, we have:
		\begin{align*}
			\pr{\mathcal{F}(D^\star) = y} &= \sum_{s \in D^\star} \pr{s \text{ is seed}, \mathcal{F}(D^\star) = y} \\		
			&= \frac{1}{|D^\star|} \sum_{s \in D^\star} p_s(y) \ \pr{(D^\star,s,y) \text{ passes test}} \ ,
		\end{align*}
	given the fact that the seed is sampled uniformly at random.

	Further, observe that terms of the sum for which $p_s(y) = 0$ can be omitted and that we can partition the set $\{ d : d \in D^\star, p_d(y) > 0 \}$ with respect to the partition number of its elements. That is:
	\[ \{ d : d \in D^\star, p_d(y) > 0 \} = \cup_{i \geq 0} C_i(D^\star, y) \ , \]
	where for each $d \in D^\star$ such that $p_d(y) > 0$, there exists a unique non-negative integer $j$ such that $d \in C_j(D^\star, y)$.
	
	Thus:
		\begin{align*}
			&\pr{\mathcal{F}(D^\star) = y} \\
			&= \frac{1}{|D^\star|} \sum_{s \in D^\star} p_s(y) \ \pr{(D^\star,s,y) \text{ passes test}} \\
			&= \frac{1}{|D^\star|} \sum_{i \geq 0} \sum_{s \in C_i(D^\star, y)} \!\!\!\! p_s(y) \ \pr{(D^\star,s,y) \text{ passes test}} \\
			&= \frac{1}{|D^\star|} \sum_{i \geq 0} \ \pr{L \geq k - |C_i(D^\star,y)|} \!\!\!\! \sum_{s \in C_i(D^\star, y)} \!\!\!\! p_s(y) \\
			&= \frac{1}{|D^\star|} \sum_{i \geq 0} \ {\rm{pt}}(D^\star, i, y) \!\!\!\! \sum_{s \in C_i(D^\star, y)} \!\!\!\! p_s(y) \\
			&= \frac{1}{|D^\star|} \sum_{i \geq 0} q(D^\star, i, y) \ ,
		\end{align*}
	given that the privacy test depends only on the partition of the seed (and not on the seed itself). (See
    the description of Privacy Test~\ref{def:randprivacytest} with $L$ being drawn from $\rm{Lap}(\frac{1}{\varepsilon_0})$.)
\end{proof}

Remark that $q(D^\star, i, y) = 0$ if and only if $C_i(D^\star, y) = \emptyset$. Also, observe that if a record is added to or subtracted from $D^\star$ then only one partition changes. As a result, we can analyze case-by-case the change in the probability of releasing $y$ from partition $i$, when adding or removing a record to partition $i$. 

The following shows that adding a record to some partition only increases the probability of passing the privacy test by at most $e^{\varepsilon_0}$. (This is a consequence of adding Laplacian noise to the threshold.)
\begin{lemma}
	Given any $y \in \mathcal{U}$, any two neighboring datasets $D$ and $D'$ such that $D' = D \cup \{ d' \}$. For any partition $i$ we have:
	\begin{align*}
	 	{\rm{pt}}(D, i, y) \leq {\rm{pt}}(D', i, y) \leq e^{\varepsilon_0} {\rm{pt}}(D, i, y) \ . 
	\end{align*}
	\label{lem:ptratio}
\end{lemma}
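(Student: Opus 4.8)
The plan is to reduce Lemma~\ref{lem:ptratio} to the standard tail-shift property of the Laplace distribution. First I would observe that adding the single record $d'$ to $D$ changes the cardinality of exactly one partition: if $d'$ has partition number $I_{d'}(y) = i^\star$, then $|C_{i^\star}(D',y)| = |C_{i^\star}(D,y)| + 1$, while $|C_i(D',y)| = |C_i(D,y)|$ for every $i \neq i^\star$ (records with generating probability $0$ never enter any partition). For all partitions $i \neq i^\star$, the thresholds defining $\mathrm{pt}(D,i,y)$ and $\mathrm{pt}(D',i,y)$ coincide, so both claimed inequalities hold with equality and there is nothing to prove. It therefore suffices to treat the one partition $i = i^\star$ whose size grows by exactly one.

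Writing $m = k - |C_i(D,y)|$, the quantities to compare become $\mathrm{pt}(D,i,y) = \pr{L \geq m}$ and $\mathrm{pt}(D',i,y) = \pr{L \geq m-1}$ with $L \sim \mathrm{Lap}(\frac{1}{\varepsilon_0})$, where the two thresholds differ by exactly one because precisely one record enters the partition. The left inequality $\mathrm{pt}(D,i,y) \leq \mathrm{pt}(D',i,y)$ is then immediate from monotonicity of the survival function, since $\{L \geq m\} \subseteq \{L \geq m-1\}$. So the entire content of the lemma rests on the right inequality $\pr{L \geq m-1} \leq e^{\varepsilon_0}\,\pr{L \geq m}$.

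For the right inequality I would invoke the defining feature of the Laplace density $f(z) = \frac{\varepsilon_0}{2} e^{-\varepsilon_0 |z|}$: a unit shift scales the density by at most $e^{\varepsilon_0}$. Concretely, from the triangle inequality $|z| \leq |z-1| + 1$ we get $|z-1| \geq |z| - 1$, hence $e^{-\varepsilon_0 |z-1|} \leq e^{\varepsilon_0} e^{-\varepsilon_0 |z|}$, i.e. $f(z-1) \leq e^{\varepsilon_0} f(z)$ pointwise in $z$. I would then substitute $u = z+1$ to write $\pr{L \geq m-1} = \int_{m-1}^{\infty} f(z)\,dz = \int_{m}^{\infty} f(u-1)\,du$ and bound the integrand by $e^{\varepsilon_0} f(u)$, yielding $\pr{L \geq m-1} \leq e^{\varepsilon_0} \int_{m}^{\infty} f(u)\,du = e^{\varepsilon_0}\,\pr{L \geq m}$, as required.

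The main obstacle here is not computational but conceptual: recognizing that the randomized threshold test is exactly the Laplace mechanism applied to the plausible-seed count $|C_i(\cdot,y)|$, which has $\mathrm{L1}$ sensitivity $1$ under neighboring datasets, so the $e^{\varepsilon_0}$ factor is inherited directly from the $\varepsilon_0$-differential privacy of that mechanism. The only care points are fixing the direction of the shift (the larger dataset $D'$ induces the \emph{smaller} threshold $m-1$), confirming that the change is an exact unit shift (one record joins the partition), and checking that the pointwise density bound survives integration, which it does because $f(z-1) \leq e^{\varepsilon_0} f(z)$ holds simultaneously for every $z$ in the range of integration.
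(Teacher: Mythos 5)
Your proposal is correct and follows essentially the same route as the paper: a case split on whether $i$ is the partition into which $d'$ falls (equality for all other partitions, since their cardinalities are unchanged), followed by the unit-shift tail bound $\pr{L \geq m} \leq \pr{L \geq m-1} \leq e^{\varepsilon_0}\pr{L \geq m}$ for the one affected partition. The only difference is that the paper states this tail bound as an unproved Observation derived from the Laplace CDF, whereas you supply the short pointwise-density argument for it; that is a welcome detail but not a different approach.
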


To prove Lemma~\ref{lem:ptratio}, we use the following observation (which comes from the CDF of the Laplace distribution).
\begin{observation}
For any $x \in \mathbb{R}$, if $L$ is a Laplace random variable with shape parameter $b$ and mean $0$, then we have: 
\begin{align*}
	\pr{L \geq x} \leq \pr{L \geq x - 1} \leq e^{\frac{1}{b}} \pr{L \geq x} \ .
\end{align*}
\label{obs:lap-test-bound}
\end{observation}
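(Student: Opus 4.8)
The plan is to reduce everything to a pointwise ratio bound on the Laplace density and then integrate, rather than manipulating the CDF directly. The first inequality, $\pr{L \geq x} \leq \pr{L \geq x-1}$, requires no computation: since $x-1 < x$, the event $\{L \geq x\}$ is contained in $\{L \geq x-1\}$, so monotonicity of probability gives the claim immediately.

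For the second inequality I would work with the density $f(z) = \frac{1}{2b} e^{-|z|/b}$. The key observation is that for every $z \in \mathbb{R}$,
\[ \frac{f(z)}{f(z+1)} = e^{(|z+1| - |z|)/b} \leq e^{1/b}, \]
where the inequality follows from the triangle inequality $|z+1| - |z| \leq |(z+1) - z| = 1$. Equivalently, after the substitution $z \mapsto w-1$, this reads $f(w-1) \leq e^{1/b} f(w)$ for all $w$.

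Then I would express the tail probability as an integral of the density and transport this bound under the integral sign. Writing $\pr{L \geq x-1} = \int_{x-1}^{\infty} f(z)\,dz$ and changing variables $z = w - 1$ yields $\int_{x}^{\infty} f(w-1)\,dw$; bounding the integrand by $e^{1/b} f(w)$ gives $\pr{L \geq x-1} \leq e^{1/b} \int_x^\infty f(w)\,dw = e^{1/b}\,\pr{L \geq x}$, as desired.

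The main thing to get right is recognizing that one should avoid explicit casework on the sign of $x$. A direct approach through the survival function forces one to treat $x \geq 1$, $0 \leq x < 1$, and $x < 0$ separately, since the closed form of $\pr{L \geq x}$ differs on the two sides of the origin and the ratio equals exactly $e^{1/b}$ only on the positive side. The density-ratio argument, resting on the single inequality $\left||z+1| - |z|\right| \leq 1$, handles all cases uniformly, and this is where essentially all the content lies.
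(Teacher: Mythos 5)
Your proof is correct. Note that the paper itself supplies no proof of this observation; it is merely asserted to "come from the CDF of the Laplace distribution," which implicitly suggests computing the survival function in closed form and comparing $\pr{L \geq x-1}$ with $\pr{L \geq x}$ directly. That route does work, but as you point out it forces a split according to the signs of $x$ and $x-1$ (the ratio equals exactly $e^{1/b}$ only when both points lie on the positive side, and the case $x \leq 0$ reduces to checking an inequality equivalent to $\cosh(1/b) \geq 1$). Your argument instead pushes the comparison down to the density level, where the single bound $|z+1| - |z| \leq 1$ gives $f(w-1) \leq e^{1/b} f(w)$ pointwise, and the tail bound follows by a change of variables and integration. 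This is uniform in $x$, avoids all casework, and is the standard way one shows that the Laplace mechanism's output distribution shifts by at most a factor $e^{1/b}$ under a unit translation --- so it also makes transparent why this observation is really just the usual Laplace-mechanism privacy argument applied to the threshold $\tilde{k}$. Both approaches are sound; yours is cleaner and no less elementary.
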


\begin{proof}[of Lemma~\ref{lem:ptratio}]
	There are two cases: $i = I_{d'}(y)$ or $i \neq I_{d'}(y)$.
	If $i = I_{d'}(y)$ then $d'$ falls into partition $i$ and so $C_i(D',y) = C_i(D,y) \cup \{ d' \}$. We have:
	\begin{align*}
		{\rm{pt}}(D', i, y) &= \pr{L \geq k - |C_i(D',y)|} \\
		&\leq e^{\varepsilon_0} \pr{L \geq k - |C_i(D',y)| + 1} \\
		&= e^{\varepsilon_0} \pr{L \geq k - |C_i(D,y)|} = e^{\varepsilon_0} {\rm{pt}}(D, i, y) \ ,
	\end{align*}
	Also, we have that: ${\rm{pt}}(D', i, y) > {\rm{pt}}(D, i, y)$.
	
	Otherwise, if $i \neq I_{d'}(y)$ then $C_i(D',y) = C_i(D,y)$, and so:
	\begin{align*}
		{\rm{pt}}(D', i, y) 
		&= \pr{L \geq k - |C_i(D,y)|} = {\rm{pt}}(D, i, y) \ .
	\end{align*}
	Putting it together yields the result.
\end{proof}

To quantify the change in $q(D^\star, i, y)$ due to adding a record to partition $i$ we need to separate two cases: (1) the partition is initially empty (or more generally has initially less than $t$ records) and (2) the partition is not empty (or more generally has at least $t$ records).
\begin{lemma}
	For any $y \in \mathcal{U}$ and any dataset $D$. Let $D' = D \cup \{ d' \}$ for some $d' \in \mathcal{U}$. Let $j$ be the partition number of $d'$ (i.e., $I_d(y) = j$). The following holds.	
	
	\begin{enumerate}
		\item[(a)]%
			For all $i \neq j$, we have $q(D',i,y) = q(D,i,y)$.
		
		\item[(b)]%
			If $|C_j(D, y)| < t$: %
			\[ q(D, j, y) < q(D', j, y) \ , \] and
			\begin{align*}
				q(D', j, y) \leq e^{-\varepsilon_0 (k - t)} \!\!\!\!\!\! \sum_{s \in C_j(D',y)} \!\!\!\! p_s(y) \leq t \ e^{-\varepsilon_0 (k - t)} \ .
			\end{align*}
			If $|C_j(D, y)| \geq t$:
			\begin{align*}
				\frac{q(D', j, y)}{q(D, j, y)} \leq e^{\varepsilon_0} \left[ 1 + \frac{\gamma}{t} \right]  \ .
			\end{align*}	
	\end{enumerate}
	\label{lem:cbycan}
\end{lemma}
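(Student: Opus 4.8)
The plan is to reduce everything to the single partition $j = I_{d'}(y)$ into which the new record $d'$ falls, since by the definitions of $q$ and ${\rm{pt}}$ both quantities depend on $D^\star$ only through the partition sets $C_i(D^\star,y)$. For part (a), adding $d'$ to $D$ leaves every partition $i \neq j$ untouched, i.e.\ $C_i(D',y) = C_i(D,y)$; hence both the factor ${\rm{pt}}(D^\star,i,y)$ (which depends only on $|C_i(D^\star,y)|$) and the sum $\sum_{s \in C_i} p_s(y)$ are unchanged, giving $q(D',i,y) = q(D,i,y)$. For part (b), the structural fact I would use throughout is $C_j(D',y) = C_j(D,y) \cup \{d'\}$, so that $|C_j(D',y)| = |C_j(D,y)| + 1$ and $\sum_{s \in C_j(D',y)} p_s(y) = \sum_{s \in C_j(D,y)} p_s(y) + p_{d'}(y)$ with $p_{d'}(y) > 0$. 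Combined with the monotonicity half of Lemma~\ref{lem:ptratio} (${\rm{pt}}(D',j,y) \geq {\rm{pt}}(D,j,y) > 0$) and the fact that the Laplace tail probability is strictly positive, this immediately yields the strict increase $q(D,j,y) < q(D',j,y)$ (covering also the degenerate case $C_j(D,y) = \emptyset$, where $q(D,j,y)=0 < q(D',j,y)$).

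For the first subcase, $|C_j(D,y)| < t$, I would bound the two factors of $q(D',j,y)$ separately. Since $|C_j(D',y)| \leq t$ and $t < k$, we have $k - |C_j(D',y)| \geq k - t > 0$, so evaluating the Laplace tail on its positive side gives ${\rm{pt}}(D',j,y) = \pr{L \geq k - |C_j(D',y)|} = \tfrac{1}{2} e^{-\varepsilon_0 (k - |C_j(D',y)|)} \leq e^{-\varepsilon_0 (k-t)}$, which produces $q(D',j,y) \leq e^{-\varepsilon_0 (k-t)} \sum_{s \in C_j(D',y)} p_s(y)$. For the remaining inequality I would use the geometric ceiling of partition $j$: every $s \in C_j(D',y)$ satisfies $p_s(y) \leq \gamma^{-j} \leq 1$ (as $j \geq 0$, $\gamma > 1$), so $\sum_{s \in C_j(D',y)} p_s(y) \leq |C_j(D',y)| \leq t$, closing the chain.

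The main work, and the step I expect to be the crux, is the second subcase $|C_j(D,y)| \geq t$, where I must control the multiplicative blow-up of $q$. Writing $q(D',j,y)/q(D,j,y)$ as the product of the ${\rm{pt}}$-ratio and the sum-ratio, Lemma~\ref{lem:ptratio} bounds the first factor by $e^{\varepsilon_0}$. For the second factor I would exploit the geometric spacing of the partition boundaries: the added mass satisfies $p_{d'}(y) \leq \gamma^{-j}$ (upper edge of partition $j$), while every incumbent obeys $p_s(y) > \gamma^{-(j+1)}$ (lower edge), so with at least $t$ such terms, $\sum_{s \in C_j(D,y)} p_s(y) > t \gamma^{-(j+1)}$. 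Hence the ratio $\sum_{s \in C_j(D',y)} p_s(y) / \sum_{s \in C_j(D,y)} p_s(y) = 1 + p_{d'}(y)/\sum_{s \in C_j(D,y)} p_s(y) < 1 + \gamma^{-j}/(t \gamma^{-(j+1)}) = 1 + \gamma/t$, and multiplying the two bounds gives $q(D',j,y)/q(D,j,y) \leq e^{\varepsilon_0}(1 + \gamma/t)$. The delicate point is aligning the edges correctly, namely that the newcomer's probability is capped by $\gamma^{-j}$ while the incumbents' probabilities are floored by $\gamma^{-(j+1)}$, since this single factor of $\gamma$ is exactly what will propagate into the theorem's final $\varepsilon = \varepsilon_0 + \ln(1 + \gamma/t)$.
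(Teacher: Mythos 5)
Your proof is correct and follows essentially the same route as the paper's: part (a) via $C_i(D',y)=C_i(D,y)$ for $i\neq j$, the strict increase via $p_{d'}(y)>0$ and the monotonicity half of Lemma~\ref{lem:ptratio}, the small-partition case via the Laplace tail bound $\pr{L \geq k - |C_j(D',y)|} \leq e^{-\varepsilon_0(k-t)}$ together with $p_s(y)\leq 1$, and the large-partition case by bounding $p_{d'}(y)$ against the $\gamma$-spaced partition edges to get the $1+\gamma/t$ factor. No gaps.
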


\begin{corollary}[of Lemma~\ref{lem:cbycan}]
	For any $y \in \mathcal{U}$ and any dataset $D$. Let $D' = D \cup \{ d' \}$ for some $d' \in \mathcal{U}$. We have $q(D, i, y) \leq q(D', i, y)$, for all $i \geq 0$.
	\label{cor:cbycan}
\end{corollary}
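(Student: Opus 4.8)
The plan is to deduce the corollary from Lemma~\ref{lem:cbycan} by a partition-by-partition argument. Fix $y \in \mathcal{U}$. If $p_{d'}(y) = 0$, then $d'$ belongs to no partition, every set $C_i(D', y)$ coincides with $C_i(D, y)$, and hence $q(D', i, y) = q(D, i, y)$ for all $i \geq 0$, so the claim is immediate. Thus I may assume $p_{d'}(y) > 0$ and let $j = I_{d'}(y)$ be the (unique) partition number into which $d'$ falls; inserting $d'$ alters only partition $j$.

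For every $i \neq j$, part (a) of Lemma~\ref{lem:cbycan} gives $q(D', i, y) = q(D, i, y)$ exactly, so $q(D, i, y) \leq q(D', i, y)$ holds trivially. It remains to treat the single affected partition $i = j$.

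For $i = j$ I would argue directly from the definition $q(D^\star, j, y) = {\rm{pt}}(D^\star, j, y) \sum_{s \in C_j(D^\star, y)} p_s(y)$, showing that both factors are non-decreasing under the insertion of $d'$. The passing-probability factor satisfies ${\rm{pt}}(D, j, y) \leq {\rm{pt}}(D', j, y)$ by Lemma~\ref{lem:ptratio}. The sum factor satisfies $\sum_{s \in C_j(D, y)} p_s(y) \leq \sum_{s \in C_j(D', y)} p_s(y)$, because $d' \in C_j(D', y)$ forces $C_j(D, y) \subseteq C_j(D', y)$ while all $p_s(y) \geq 0$. Multiplying these two monotone inequalities yields $q(D, j, y) \leq q(D', j, y)$, and combining with the case $i \neq j$ completes the argument over all $i \geq 0$.

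The one place to be careful is that I would \emph{not} route the $i = j$ case through the ratio bound of Lemma~\ref{lem:cbycan}(b): in the regime $|C_j(D, y)| \geq t$ that part only supplies an \emph{upper} bound $q(D', j, y)/q(D, j, y) \leq e^{\varepsilon_0}(1 + \frac{\gamma}{t})$, which cannot by itself certify that the ratio is at least $1$. Recognizing instead that $q$ factors as a product of two quantities, each monotone under record insertion (the first by Lemma~\ref{lem:ptratio}, the second trivially since a nonnegative term is added), sidesteps this difficulty and also makes the case split on $t$ entirely unnecessary for the corollary.
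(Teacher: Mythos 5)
Your proof is correct and takes essentially the same route as the paper: the paper establishes the corollary inside the proof of Lemma~\ref{lem:cbycan} itself, where part (a) disposes of $i \neq j$ and the first chain of inequalities in part (b) proves $q(D,j,y) < q(D',j,y)$ by exactly your two-factor monotonicity argument (the sum over $C_j(\cdot,y)$ grows because $p_{d'}(y)$ is added, and ${\rm{pt}}$ is non-decreasing by Lemma~\ref{lem:ptratio}), with that inequality derived before and independently of the case split on $t$. Your explicit treatment of $p_{d'}(y)=0$ and your observation that the ratio bound in the $|C_j(D,y)| \geq t$ regime would not suffice are sound refinements but do not constitute a different approach.
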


%%%%
\begin{proof}[of Lemma~\ref{lem:cbycan}]
	Fix $y$ and let $j$ be the partition that $d'$ falls into. 
	
	For part (a), remark that for $i\neq j$, we have $C_i(D,y) = C_i(D',y)$. So $q(D, i, y) = q(D', i, y)$. 
	
	For part (b), we have that:
	\begin{align*}
		q(D, j, y) &= {\rm{pt}}(D, j, y) \!\!\!\!\! \sum_{s \in C_j(D,y)} \!\!\!\! p_s(y) \\
				   &< {\rm{pt}}(D, j, y) \left[ \sum_{s \in C_j(D,y)} \!\!\!\! p_s(y) + p_{d'}(y) \right]  \\
				   &= {\rm{pt}}(D, j, y) \!\!\!\!\!  \sum_{s \in C_j(D',y)} \!\!\!\! p_s(y) \\
				   &\leq {\rm{pt}}(D', j, y) \!\!\!\!\!  \sum_{s \in C_j(D',y)} \!\!\!\! p_s(y) \\
				   &= q(D', j, y) \ ,
	\end{align*}
	given that $p_{d'}(y) > 0$ and ${\rm{pt}}(D', j, y) \geq {\rm{pt}}(D, j, y)$ (Lemma~\ref{lem:ptratio}).
	
	Now, if $|C_j(D, y)| < t$, then:
	\begin{align*}
		q(D', j, y) &= {\rm{pt}}(D', j, y) \!\!\!\!\! \sum_{s \in C_j(D',y)} \!\!\!\! p_s(y) \\
					&\leq e^{-\varepsilon_0 (k - t)} \!\!\!\!\! \sum_{s \in C_j(D',y)} \!\!\!\! p_s(y) \\
					&\leq t \ e^{-\varepsilon_0 (k - t)} \ ,
	\end{align*}
	given that $|C_j(D', y)| \leq t$ and $p_d(y) \leq 1$ for any $d$. Here, the first inequality follows from the fact that ${\rm{pt}}(D', j, y) = \pr{L \geq k - |C_j(D',y)|} \leq \pr{L \geq k - t} = \frac{1}{2} e^{-\varepsilon_0 (k - t)}$.
	
	If $|C_j(D, y)| \geq t$, we have:
	\begin{align*}
		q(D', j, y) &= {\rm{pt}}(D', j, y) \ \!\!\!\!\! \sum_{s \in C_j(D',y)} \!\!\!\! p_s(y) \\
					&= {\rm{pt}}(D', j, y) \ [ \!\!\!\!\!  \sum_{s \in C_j(D,y)} \!\!\!\! p_s(y) + p_{d'}(y) ] \\
					&\leq e^{\varepsilon_0} \ {\rm{pt}}(D, j, y) \ [ \!\!\!\!\!  \sum_{s \in C_j(D,y)} \!\!\!\! p_s(y) + p_{d'}(y) ] \\
					&\leq e^{\varepsilon_0} \ [1 + \frac{\gamma}{t}] \ {\rm{pt}}(D, j, y) \!\!\!\!\!  \sum_{s \in C_j(D,y)} \!\!\!\! p_s(y) \\
					&= e^{\varepsilon_0}  \ [1 + \frac{\gamma}{t}] \ q(D, j, y) \ ,
	\end{align*}
	given Lemma~\ref{lem:ptratio} and the fact that $p_{d'}(y) \leq \gamma \ p_s(y)$ for any $s \in C_j(D,y)$ and so $p_{d'}(y) \leq \frac{\gamma}{t} \sum_{s \in C_j(D,y)}  p_s(y)$.
\end{proof}
The following Lemma is the core result underlying Theorem~\ref{thm:dp}.
\begin{lemma}
Let $\mathcal{F}$ denote Mechanism~\ref{def:mechanism} with the (randomized) Privacy Test~\ref{def:randprivacytest} and parameters $k \geq 1$, $\gamma > 1$, and $\varepsilon_0 > 0$. Take any dataset $D$ with $|D| \geq k$ and let $D' = D \cup \{d'\}$ for any $d' \in \mathcal{U}$. Then for any integer $1 \leq t < k$ and synthetic record $y \in \mathcal{U}$, we have:
	\begin{align*}
		\pr{\mathcal{F}(D) = y} \leq e^{\varepsilon} \pr{\mathcal{F}(D') = y} \ ,
	\end{align*}
	and
	\begin{align*}
		\pr{\mathcal{F}(D') = y} \leq e^{\varepsilon} \pr{\mathcal{F}(D) = y} + \delta \ ,
	\end{align*}
	where $\delta = \delta(D', d', y) \leq e^{-\varepsilon_0 (k-t)}$ and $\varepsilon = \varepsilon_0 + \ln{(1+\frac{\gamma}{t})}$. Here, $\delta(D',d',y) = e^{-\varepsilon_0 (k - t)} |D'|^{-1}  \sum_{s \in C_j(D',y)} p_s(y)$, with $j = I_{d'}(y)$.
	\label{lem:dpcore}
\end{lemma}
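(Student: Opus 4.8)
The plan is to reduce everything to the decomposition $\pr{\mathcal{F}(D^\star)=y}=\frac{1}{|D^\star|}\sum_{i\ge0}q(D^\star,i,y)$ from Lemma~\ref{lem:decomp}, and then exploit the fact (Lemma~\ref{lem:cbycan}(a)) that adding $d'$ changes only the single partition $j=I_{d'}(y)$ into which $d'$ falls. With $D'=D\cup\{d'\}$ and $|D'|=|D|+1$, every term $q(\cdot,i,y)$ with $i\ne j$ is identical for $D$ and $D'$, so the whole analysis localizes to partition $j$ together with the mismatch between the normalizers $1/|D|$ and $1/|D'|$.

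For the first inequality $\pr{\mathcal{F}(D)=y}\le e^\varepsilon\pr{\mathcal{F}(D')=y}$, I would use the monotonicity $q(D,i,y)\le q(D',i,y)$ for every $i$ (Corollary~\ref{cor:cbycan}) to get $\sum_i q(D,i,y)\le\sum_i q(D',i,y)$, and then pull out the ratio of normalizers:
\[
\pr{\mathcal{F}(D)=y}=\frac{1}{|D|}\sum_i q(D,i,y)\le\frac{|D|+1}{|D|}\cdot\frac{1}{|D'|}\sum_i q(D',i,y)=\frac{|D|+1}{|D|}\pr{\mathcal{F}(D')=y}.
\]
It then suffices to check $\frac{|D|+1}{|D|}\le e^\varepsilon$. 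Since $|D|\ge k$ we have $\frac{|D|+1}{|D|}\le 1+\frac1k$, and because $t<k$ and $\gamma>1$ we get $1+\frac1k<1+\frac{\gamma}{t}\le e^{\varepsilon_0}\bigl(1+\frac{\gamma}{t}\bigr)=e^\varepsilon$, which closes this direction (note that $\delta$ is not needed here).

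For the second inequality I would split on the size of $C_j(D,y)$, exactly the two regimes isolated in Lemma~\ref{lem:cbycan}(b). Writing $\pr{\mathcal{F}(D')=y}=\frac{1}{|D'|}\bigl[\sum_{i\ne j}q(D,i,y)+q(D',j,y)\bigr]$ via part (a), the case $|C_j(D,y)|\ge t$ uses the multiplicative bound $q(D',j,y)\le e^\varepsilon q(D,j,y)$; since $e^\varepsilon\ge1$ this lets me factor $e^\varepsilon$ out of the whole sum and, using $\frac{1}{|D'|}\le\frac{1}{|D|}$, conclude $\pr{\mathcal{F}(D')=y}\le e^\varepsilon\pr{\mathcal{F}(D)=y}$ with no additive term. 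The case $|C_j(D,y)|<t$ is where $\delta$ enters: I bound the unchanged part by $\frac{1}{|D'|}\sum_{i\ne j}q(D,i,y)\le\frac{|D|}{|D'|}\pr{\mathcal{F}(D)=y}\le e^\varepsilon\pr{\mathcal{F}(D)=y}$, and bound the single changed partition by $\frac{1}{|D'|}q(D',j,y)\le\frac{1}{|D'|}e^{-\varepsilon_0(k-t)}\sum_{s\in C_j(D',y)}p_s(y)=\delta(D',d',y)$; finally $\sum_{s\in C_j(D',y)}p_s(y)\le t$ together with $t<|D'|$ yields $\delta\le e^{-\varepsilon_0(k-t)}$.

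The main obstacle is not any single algebraic step but keeping the normalizer bookkeeping honest: the factor $1/|D|$ is strictly larger than $1/|D'|$, so in the first direction it works against us and must be absorbed into $e^\varepsilon$ (which is precisely why we need $\gamma>1$ and $t<k$), whereas in the second direction it works in our favor (giving $|D|/|D'|\le1$). The conceptual crux, already established inside Lemma~\ref{lem:cbycan}, is the dichotomy that a sparse partition ($|C_j(D,y)|<t$) contributes only the exponentially small $\delta$ because the randomized test almost surely fails, while a dense partition ($|C_j(D,y)|\ge t$) inflates the release probability by at most the multiplicative $e^\varepsilon$; this proof merely assembles those two facts across partitions and through the normalization.
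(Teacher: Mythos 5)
Your proposal is correct and follows essentially the same route as the paper's proof: the decomposition of Lemma~\ref{lem:decomp}, the monotonicity of Corollary~\ref{cor:cbycan} for the first inequality, and the two-regime case split of Lemma~\ref{lem:cbycan}(b) on $|C_j(D,y)|$ for the second. The only (harmless) deviations are that you bound the unchanged partitions by $e^{\varepsilon}\pr{\mathcal{F}(D)=y}$ where the paper uses the tighter coefficient $|D|/|D'|\le 1$, and you route the bound $\delta\le e^{-\varepsilon_0(k-t)}$ through $\sum_{s\in C_j(D',y)}p_s(y)\le t$ rather than $\le |D'|$; both are valid.
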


\begin{proof}[of Lemma~\ref{lem:dpcore}]
	Fix an arbitrary synthetic record $y \in \mathcal{U}$ and an arbitrary dataset $D$ with $|D| \geq k$. Let $D' = D \cup \{ d' \}$ for some arbitrary $d' \in \mathcal{U}$. Applying Lemma~\ref{lem:decomp} to $D$ we have: $\pr{\mathcal{F}(D) = y} = \frac{1}{|D|} \sum_{i \geq 0} q(D, i, y)$.
	Also, from Corollary~\ref{cor:cbycan} we have $q(D, i, y) \leq q(D', i, y)$ for all $i$. Thus: 
	\begin{align*}
		\pr{\mathcal{F}(D) = y} &= \frac{1}{|D|} \sum_{i \geq 0} q(D, i, y) \\
		&\leq \frac{1}{|D|} \sum_{i \geq 0} q(D', i, y) \\
		&= \frac{|D'|}{|D|} \pr{\mathcal{F}(D') = y} \\
		&\leq \left( 1 + \frac{1}{k} \right) \ \pr{\mathcal{F}(D') = y} \ .	
	\end{align*}
	Observe that since (by assumption) $\gamma > 1$ and $1 \leq t \leq k$, we have: $\frac{1}{k} \leq \frac{1}{t}$, and so $1+\frac{1}{k} < 1 + \frac{\gamma}{t} \leq e^{\varepsilon_0} (1 + \frac{\gamma}{t}) = e^{\varepsilon}$. This shows the first part.
	
	For the second part, apply Lemma~\ref{lem:decomp} to $D'$, and let $j$ be the partition number of $d'$, i.e., $j = I_{d'}(y)$. We have:
	\begin{align*}
		\pr{\mathcal{F}(D') = y} &= \frac{1}{|D'|} \sum_{i \geq 0} q(D', i, y) \\
		&= \frac{1}{|D'|} \left[ \sum_{i \geq 0 : i \neq j} q(D', i, y) + q(D', j, y) \right ] \\
		&= \frac{1}{|D'|} \sum_{i \geq 0 : i \neq j} q(D, i, y) + \frac{q(D', j, y)}{|D'|} \ .
	\end{align*}
	The last equality follows from Lemma~\ref{lem:cbycan} part (a).
	
	Applying Lemma~\ref{lem:cbycan} part (b), we obtain two cases.
	\begin{itemize}
		\item{Case 1: $|C_j(D, y)| < t$. 
			We have:
			\begin{align*}
				\pr{\mathcal{F}(D') = y} &= \frac{1}{|D'|} \sum_{i \geq 0 : i \neq j} q(D, i, y) + \frac{q(D', j, y)}{|D'|} \\
				&\leq \frac{1}{|D'|} \sum_{i \geq 0 : i \neq j} q(D, i, y) + \delta(D',j,y)  \\
				&\leq \frac{1}{|D'|} \sum_{i \geq 0} q(D, i, y) + \delta(D',j,y) \\
				&= \frac{|D|}{|D'|} \pr{\mathcal{F}(D) = y} + \delta(D',j,y) \\
				&\leq \pr{\mathcal{F}(D) = y} + \delta  \ ,
			\end{align*} 
			where $\delta(D',j,y) = \frac{1}{|D'|} e^{-\varepsilon_0 (k - t)} \sum_{s \in C_j(D',y)} p_s(y)$.
		}
		\item{Case 2: $|C_j(D, y)| \geq t$. 
			We have:
			\begin{align*}
				&\pr{\mathcal{F}(D') = y} \\
				&= \frac{1}{|D'|} \left[ \sum_{i \geq 0 : i \neq j} q(D, i, y) + q(D', j, y) \right] \\
				&\leq \frac{1}{|D'|} \left[ \sum_{i \geq 0 : i \neq j} q(D, i, y) +  e^{\varepsilon_0}[1 + \frac{\gamma}{t}] q(D, j, y) \right] \\
				&\leq e^{\varepsilon_0}[1 + \frac{\gamma}{t}] \frac{1}{|D'|} \sum_{i \geq 0} q(D, i, y) \\
				&= e^{\varepsilon_0}[1 + \frac{\gamma}{t}] \frac{|D|}{|D'|} \pr{\mathcal{F}(D) = y} \\
				&\leq e^{\varepsilon} \pr{\mathcal{F}(D) = y} \ ,
			\end{align*}
			for $\varepsilon = \varepsilon_0 + \ln{(1 + \frac{\gamma}{t})}$ given that $\frac{|D|}{|D'|} < 1$.
		}
	\end{itemize}
	
	Letting $\delta(D',d',y) = \delta(D',I_{d'}(y),y)$ finishes the proof.
\end{proof}

With this, we are in a position to prove Theorem~\ref{thm:dpconn}.
\begin{proof}[of Theorem~\ref{thm:dpconn}]
	Fix dataset $D$ with $|D| \geq k$ and any record $d' \in \mathcal{U}$. Let $D' = D \cup \{d'\}$. The range of $\mathcal{F}$ is $\mathcal{U}$ and so any outcome $Y$ is a non-empty subset of $\mathcal{U}$. Fix an arbitrary $Y \subseteq \mathcal{U}$ with $Y \neq \emptyset$. 
	
	We will show that $\pr{F(D_1) \in Y} \leq e^{\varepsilon} \pr{F(D_2) \in Y} + \delta$, whether $D_1 = D$ and $D_2 = D'$, or $D_1 = D'$ and $D_2 = D$.
	
	Consider first the case $D_1 = D$ and $D_2 = D'$. Applying Lemma~\ref{lem:dpcore}, we obtain:
	\begin{align*} 
		\pr{\mathcal{F}(D) \in Y} &= \sum_{y \in Y} \pr{\mathcal{F}(D) = y} \\
		&\leq \sum_{y \in Y} e^\varepsilon \pr{\mathcal{F}(D') = y} \\ 
		&= e^\varepsilon \pr{\mathcal{F}(D') \in Y} \ .
	\end{align*}
	
	Now, consider the case $D_1 = D'$ and $D_2 = D$. Define $c(d',y) = |C_{I_{d'}(y)}(D, y)|$. Given $d'$, we can partition $Y$ between those $y \in Y$ such that the partition in which $d'$ falls has at least $t$ and those such that the partition has less than $t$. That is, $Y = Y_{t-} \cup Y_{t+}$, with $Y_{t-} = \{ y : y \in Y, c(d',y) < t \}$ and $Y_{t+} = \{ y : y \in Y, c(d',y) \geq t \}$. We have:
	\begin{align*} 
		\pr{\mathcal{F}(D') \in Y} &= \sum_{y \in Y} \pr{\mathcal{F}(D') = y} \\
		&\leq  \sum_{y \in Y_{t+}}  e^\varepsilon \pr{\mathcal{F}(D) = y} \\
		&+ \sum_{y \in Y_{t-}}  [ e^\varepsilon \pr{\mathcal{F}(D) = y} + \delta(D', I_{d'}(y), y) ] \\
		&= e^\varepsilon \sum_{y \in Y} \pr{\mathcal{F}(D) = y} +  \sum_{y \in Y_{t-}}  \delta(D', d', y) \\
		&= e^\varepsilon \pr{\mathcal{F}(D) \in Y} + \sum_{y \in Y_{t-}} \delta(D', d', y) \ ,
	\end{align*}
	where the inequality applies cases of Lemmas~\ref{lem:cbycan} and~\ref{lem:dpcore} separately to each $y$ depending on whether $y \in Y_{t-}$ (case 1 in the proof of Lemma~\ref{lem:dpcore}) and $y \in Y_{t+}$ (case 2 in the proof of Lemma~\ref{lem:dpcore}).
	
	It remains to show that $\sum_{y \in Y_{t-}} \delta(D', d', y) \leq e^{-\varepsilon_0 (k - t)}$. For this define $C(D', d', y) = C_{I_{d'}(y)}(D', y)$. We have:
	\begin{align*}
		\sum_{y \in Y_{t-}} \delta(D', d', y) &= \sum_{y \in Y_{t-}} \!\!\! \frac{e^{-\varepsilon_0 (k - t)}}{|D'|} \!\!\!\! \sum_{s \in C(D', d', y)} \!\!\!\!\!\!\! p_s(y) \\
		&= \frac{e^{-\varepsilon_0 (k - t)}}{|D'|} \sum_{s \in D'} \sum_{y \in Y_{t-}} \mathbbm{1}_{I_{d'}(y) = I_s(y)} \ p_s(y) \\
		&\leq e^{-\varepsilon_0 (k - t)} \ ,
	\end{align*}
	given that $\sum_{y \in Y_{t-}} \mathbbm{1}_{I_{d'}(y) = I_s(y)} \  p_s(y) \leq \sum_{y \in \mathcal{U}} p_s(y) \leq 1$ for all $s \in D'$.
	
\end{proof}

\end{document}